\newtheorem{theorem}{Theorem}
\newtheorem{prop}[theorem]{Proposition}
\newtheorem{lem}[theorem]{Lemma}
\newtheorem{defi}[theorem]{Definition}
\newtheorem{cor}[theorem]{Corollary}
\newtheorem{rmk}[theorem]{Remark}
\newcommand\algorithmicprocedure{\textbf{procedure}}
\newcommand{\algorithmicendprocedure}{\algorithmicend\ \algorithmicprocedure}
\newcommand\PROCEDURE[3][default]{%
  \ALC@it
  \algorithmicprocedure\ \textsc{#2}(#3)%
  \ALC@com{#1}%
  \begin{ALC@prc}%
}
\newcommand\ENDPROCEDURE{%
  \end{ALC@prc}%
  \ifthenelse{\boolean{ALC@noend}}{}{%
    \ALC@it\algorithmicendprocedure
  }%
}
\newenvironment{ALC@prc}{\begin{ALC@g}}{\end{ALC@g}}
\newcommand{\NN}{\mathcal N}
\newcommand{\cNN}{\overline{\mathcal N}}
\newcommand{\unrealized}{inactive }
\newcommand{\gm}{\ensuremath{G_{M}}\xspace}
\newcommand{\grb}{\ensuremath{G_{RB}}\xspace}
\newcommand{\grbk}{\ensuremath{{G^k_{RB}}}\xspace}
\newcommand{\rbg}{red-black graph\xspace}
\newcommand{\rbgs}{red-black graphs\xspace}
\newcommand{\rs}{red $\Sigma$-graph\xspace}
\newcommand{\dolloone}{Dollo-$1$\xspace}
\newcommand{\realized}{realized\xspace}
\newcommand{\realize}{realize\xspace}
\newcommand{\realization}{realization\xspace}
\newcommand{\removed}{isolated\xspace}
\newcommand{\remove}{isolate\xspace}
\newcommand{\removedfrom}{isolated in\xspace}
\newcommand{\removing}{isolating\xspace}
\newcommand{\VInt}{\ensuremath{C_{I}}\xspace}
\newcommand{\VUni}{\ensuremath{C_{U}}\xspace}
\newcommand{\VCont}{\ensuremath{C_{C}}\xspace}
\newcommand{\VCR}{\ensuremath{C_{R}}\xspace}
\newcommand{\VCB}{\ensuremath{C_{B}}\xspace}
\newcommand{\VSR}{\ensuremath{S_{R}}\xspace}
\newcommand{\VSB}{\ensuremath{S_{B}}\xspace}
\newcommand{\VSBm}{\ensuremath{S_{B}^m}\xspace}
\newcommand{\VCBm}{\ensuremath{C_{B}^m}\xspace}
\newcommand{\Sm}{\ensuremath{S_7^m}\xspace}
\newcommand{\RecRed}{\textsc{ReductionRecusive}\xspace}
\newcommand{\Rea}{\textsc{Realize}\xspace}
\newcommand{\correction}[1]{#1}
\begin{document}
\author[1]{Paola Bonizzoni}
\author[1]{Gianluca Della Vedova}
\author[2]{Mauricio Soto Gomez}
\author[2]{Gabriella Trucco}
\affil[1]{Universita degli Studi di Milano-Bicocca, Italia.
\{paola.bonizzoni, gianluca.dellavedova\}@unimib.it}
\affil[2]{Università degli Studi di Milano, Italy.
\{mauricio.soto,gabriella.trucco\}@unimi.it}

%

\title{On recognizing graphs representing Persistent Perfect Phylogenies} 
\date{}
\maketitle

\begin{abstract}
The Persistent Perfect phylogeny, also known as \dolloone,  has been introduced as a generalization of the well-known perfect phylogenetic model for binary characters to deal with the potential loss of characters. 
In  \cite{BonizzoniBDT12} it has been proved that the problem of deciding the existence of a Persistent Perfect phylogeny can be reduced to the one of recognizing a class of bipartite graphs whose nodes are species and characters. 
Thus an interesting question is solving  directly the problem of  recognizing such graphs.
\correction{We present a polynomial-time algorithm for deciding Persistent Perfect phylogeny existence in maximal graphs, where no character's species set is contained within another character's species set.}
Our solution,  that relies only on graph properties,  narrows  the gap between the linear-time simple algorithm for Perfect Phylogeny and the NP-hardness results for the Dollo-$k$ phylogeny with $k>1$.


\end{abstract}

\section{Introduction}
\label{sec:intro}
The perfect phylogeny model is the simplest approach to reconstruct the evolutionary history from characters~\cite{Gusfield91}, and it has many applications in computational biology. 
The instance of the computational problem is a matrix, where each row is associated with a species, each column with a character, and the values in the matrix encode which species have any given character.
The question is to determine whether there exists a tree compatible with the model (in this case, where each character is gained once) and whose leaves correspond to the rows of the matrix.
An equivalent representation is given by bipartite graphs, where the vertex set is partitioned into character nodes and species nodes, and edges correspond to $1$s in the matrix.
This relation allows us to reformulate the problem of deciding whether an input matrix admits a tree representation as the recognition of a graph class~\cite{BBG21,Peer_2004}. 
The characterization of matrices that admit a Perfect Phylogeny, also known as the \textit{four gamete} test, has been exploited to obtain a linear-time algorithm for the problem~\cite{Gusfield91}.
Namely, a matrix has a (directed) perfect phylogeny if and only if it does not have two characters and three species inducing the submatrix with the pairs $(0,1)$, $(1,0)$, and $(1,1)$.
The four gamete test name is due to the fact that, for any two characters, the root of the perfect phylogeny has implied values $(0,0)$.
From a graph recognition perspective, bipartite graphs admitting a directed perfect phylogeny are those that do not contain an induced path of five vertices starting in a species vertex, that is\correction{,} a  $P_5$ also called a $\Sigma$-graph.
Notice that a $\Sigma$-graph is the graph corresponding to the forbidden submatrix of the four gamete test.
In this graph reformulation, a polynomial-time algorithm iteratively finds a \emph{universal} character which is then eliminated from the graph, where a character $c$ is universal if it is adjacent to all species in the connected component of $c$ --- the character elimination is called \emph{character realization} --- until we obtain the empty graph.
Moreover, if a given graph has an induced  $\Sigma$-graph, then no character of the subgraph can be realized, even after realizing some other character.
In other words, a polynomial-time algorithm consists of determining if there exists a sequence of character realizations such that all graphs do not have an induced $\Sigma$-graph and the final graph is empty --- such a sequence of realizations is called a \emph{reduction} of the graph~\cite{BonizzoniBDT12}.  
This notion is fundamental also for the optimal algorithm for the Incomplete Perfect Phylogeny Problem~\cite{BBG21,Peer_2004}, where the input is a binary matrix with missing values that have to be completed so that the resulting matrix has a directed perfect phylogeny.

Although the perfect phylogeny problem has been crucial in computational biology to solve haplotyping problems~\cite{Boniz,Gus02}, most recent applications, mainly in tumor phylogeny inference \cite{bonizzoni2018does,gpps,Kebir,Kuipers13102017},  have increased the interest in the Dollo-$k$ model, which is the generalization of the Perfect Phylogeny to the case where binary characters may be lost at most $k$ times in the tree. 
Binary characters are specified by two states: $0$ and $1$, where $1$ is associated with the gain of the character during the evolutionary history, while $0$ corresponds to the absence of the character.
In the general model, state $0$ can be reached by the loss of the character itself, that is, a change of state from $1$ to $0$. 
From a computational point of view, the Dollo-$k$ model leads to NP-complete decision problems for $k>1$ \cite{Dollo}.
The study of the \dolloone model has been done mainly under the name of Persistent Perfect Phylogeny \cite{BonizzoniBDT12,Gusfield2015,wicke2020}.
In this paper, we analyze the complexity of the \dolloone decision problem by considering its formulation as the recognition of graphs representing instances of the \dolloone decision problem. 
In \cite{BonizzoniBDT12} it has been shown that the \dolloone decision problem can be reformulated as computing a reduction in bipartite graphs, called \emph{red-black} graphs. 
Since its introduction in \cite{BonizzoniBDT12}, a characterization of graphs admitting a Persistent Perfect Phylogeny via a minimal set of forbidden subgraphs appears to be a challenging task. 
Thus, the existence of a polynomial time algorithm for recognizing such graphs based on detecting forbidden substructures is an open problem.
\correction{We progress toward solving this question by proving that recognizing graphs with a Persistent Perfect Phylogeny can be solved in polynomial time when restricted to maximal graphs, that is, graphs where no character's species set is a proper subset of another character's species set.}
The recognition algorithm is based on proving the existence of an ordering of characters under the inclusion relationship w.r.t. to their neighborhood when restricted to a given set of species. 
Based on this ordering, we show the existence of a (restricted) universal character: the realization of such \correction{a} universal character produces a reducible graph, allowing the iterative construction of a reduction. 


\section{Preliminary definitions and results}

\paragraph{\bf Matrix representation.}
The traditional representation of an instance for the phylogeny reconstruction problem is a $n\times m$ binary matrix $M$:
the rows and columns of $M$ are associated with a set $S$ of species and a set $C$ of characters, respectively.  In the matrix $M$,  $M[s,c] = 1$ if and only if the character $c$ is present in the species $s$, otherwise $M[s,c] = 0$. Then the values $0, 1$ represent the two  possible \emph{states} of character $c$.  We also say that species $s$ has the character $c$ if $M[s,c] = 1$ or $c$ is in the set of characters of  $s$. 

A \textbf{phylogenetic tree} for the matrix $M$ is a rooted tree that describes the evolution of the set of species from a common ancestor. 
Starting from the root, which represents the species with all characters at state $0$, characters are \emph{gained} or \emph{lost} along the edges of the tree. More precisely,  a character that changes state from $0$ to $1$ is gained along an edge $(x, y)$ labeled by the character,   and vice versa  is lost when it changes state from $1$ to $0$ along the edge.
The Dollo-$k$ problem asks for a phylogenetic tree where 
(1) each character $c$ is gained   at most once in the entire tree;      
(2) each character $c$ may be lost   at most $k$ times along $k$ edges of the tree; and 
(3) each species $s$ is associated with a  tree node $x$ such that along the edges of the path from the root to  node $x$,  only  characters contained in $s$ are gained and not lost. These characters are exactly those in state $1$ in matrix $M$ for row $s$. 
When $k=0$, then characters can be only gained once but never lost in the phylogenetic tree which corresponds to the Perfect Phylogeny model; while if $k=1$, the tree is a Persistent  Phylogeny. 
Figure \ref{fig:alberoesempio} depicts an example of an input matrix $M$ together with a Persistent Perfect Phylogeny phylogenetic tree for $M$.

An alternative \textbf{graph representation} of the input matrix is the following.
  Given a matrix $M$ on $n$ species and $m$ characters, we define the associated (bipartite) graph $G_M$, where $V(G_M)=S\cup C$, and $E(G_M)=\{ \{s,c\}:\: s\in S,\, c\in C,\, M[s,c]=1\}$.
Given the pair $\{s,c\}\in E(G_M)$, we say that $s$ and $c$ are neighbors, otherwise they  are non-neighbors.  
In other words, the vertex set of the graph $G_M$ consists of the species and the characters, and a species $s$ is connected to a character $c$ only if $s$ has the character $c$.

Given a species $s$, with $C(s)$ we denote the set of characters that are adjacent to $s$ in $G_M$. 
Then we say that a species $s$ \emph{includes} another species $s'$ if \correction{$C(s') \subseteq C(s)$}.
Similarly, for a given character $c$, we denote with   $S(c)$ the set of species that are adjacent to $c$ in the graph $G_M$.
We will say that a character $c$ is \emph{universal} for a set $X$ of species nodes if $c$ is adjacent to all the species of $X$, that is $N(c)\supseteq X$, where $N(c)$ is the set of neighbors of $c$.
On the other hand, two characters $c_1$ and $c_2$ are \emph{independent} if they do not share any species, that is if $N(c_1) \cap N(c_2) =\emptyset$.
%
%
\correction{In this paper, we adopt the graph representation introduced in \cite{bonizzoni_when_2014,bonizzoni2014,Bonizzoni2016}.}






In \cite{BonizzoniBDT12}, the authors characterize the matrices admitting a \dolloone (Persistent Perfect) phylogeny representation using a  graph representation. 
More precisely,  they extend the definition of graph $G_M$ by introducing the notion of \emph{red-black graph}, detailed below,  together with a graph operation on characters, called \emph{realization}. 
This graph and the related graph operations allow the representation of the gains and losses of characters in a phylogenetic tree for the general \dolloone problem.

\paragraph{\bf Red-Black graphs} are associated with instances of a slightly more general version of the problem, where the state in the root of the phylogeny is part of the input and is not necessarily the all-$0$ state. In \rbgs, vertices in $C$ are colored \emph{black} or \emph{red}.
Moreover, edges are black or red and an edge is black if and only if its endpoints are both black.
Formally,  a red-black graph consists of a bipartite graph $\grb =(C \cup S, E_R \cup E_B)$ and a set $R\subseteq C$ of red vertices, where $C$ is the set of characters, $S$ is the set of species, $E_R \subseteq R \times S$ is the set of red edges,  and $E_B \subseteq (C\setminus R) \times S$ is the set of black edges.  

The red-black graph \grb for a node $x$ of the tree $T$ is the graph representation of the  instance solved by the subtree of $T$ with root $x$.
Moreover, the red characters of \grb are exactly the characters that have been gained and not lost on the path from the root of $T$ to $x$.
Indeed, initially $\grb$ is the graph $G_M $ associated with a matrix $M$. Then  the gain  (denoted as label $c^+$ of the edge) or the loss (denoted as label $c^-$ of the edge) of the character $c$   along an edge $(x, y)$ of the tree is encoded by a graph operation on the graph associated with the node $x$, and the result of such operation is the graph associated with the node $y$.

More precisely, in the \rbg associated with a node $x$ of a phylogenetic tree, a character $c$ is adjacent  to a species $s$ via a black edge if the character will be gained in the subtree of $T$ rooted at $x$, i.e. $c$ has state $1$ in the species node $s$. 
Conversely, a red edge between a character node $c$ and a species node $s$ represents the fact  that the character was previously gained in the phylogeny, but it will be lost  in the subtree of $T$ rooted at $x$, since the character will have state $0$ in the species node $s$, i.e. it is persistent in the tree.
Figure~\ref{fig:alberoesempio} represents an example of a phylogenetic tree together with the \rbgs associated with each node in the tree.
%

\paragraph{\bf The realization of a character} represents how a \rbg associated with a node in the phylogenetic tree must be modified when a character is gained or lost.
   Namely, given a \rbg \grb and a character $c\in \grb$ incident only on black edges,  the \emph{realization} of $c$ is the graph obtained from \grb by the following steps:
    \begin{enumerate}
      \item adding red edges joining $c$ to the  species nodes in its connected component that do not have character $c$, 
       \item removing all the black edges incident to $c$, and
        \item  as long as such a character exists, remove all edges incident on a \emph{red universal} character, that is, characters that are adjacent via red edges to all the species in the same connected component.
    \end{enumerate}





 The combination of steps 1 and 2  of the realization of a character $c$ corresponds to the addition of a new node and a new edge labeled by the character $c$ in the phylogenetic tree, i.e. $c$ is \emph{gained}.
 When all the red edges incident to a character $c$ are removed in step (3), that is, $c$ was universal with red edges in its connected component and becomes isolated; in the phylogenetic tree this corresponds to adding a new node with an edge labeled with the loss of the character as $c^-$, in this case we say that the \emph{character is \removed}. 
Finally, if the gain or loss of a character isolates a species node $s$, then in the tree, the newly created node is labeled with $s$ and we say that the \emph{species is \removed}.
Formally, assuming that the \rbg is connected, 

\begin{defi}[Realization of a character]
\label{def:realization}
   Given a connected \rbg $\grb =  (S \cup C, E_R \cup E_B)$ and a character $c\in \grb$ such that   $c$ is incident only to edges in $E_B$,  the \emph{realization of character $c$} is the graph $G' =(S \cup C, E'_R \cup E'_B)$ where: 

  \vspace{-5pt}
 \begin{enumerate}
 \item $E'_B = E_B \setminus \{(c,s_1): (c,s_1) \in E_B\}$, 
 \item $E'_R = E_R \cup \{(c, s): (c,s) \notin E_B\}$, 
\item  $E'_R$ is updated by isolating (that is, removing all incident edges) red-universal characters, until no red-universal character exists.

    \end{enumerate}
  \end{defi}

Figure~\ref{fig:alberoesempio} shows an example of the application of the procedure for constructing a \dolloone phylogeny from a suitable sequence of realizations. 

 To distinguish characters that have been realized  from unrealized ones in a \rbg, we will call \emph{active} those characters that are incident on red edges and \emph{inactive} those characters incident only on  black edges. 




\begin{figure}[h!tb]
    \centering
      \includegraphics[width=.95\linewidth]{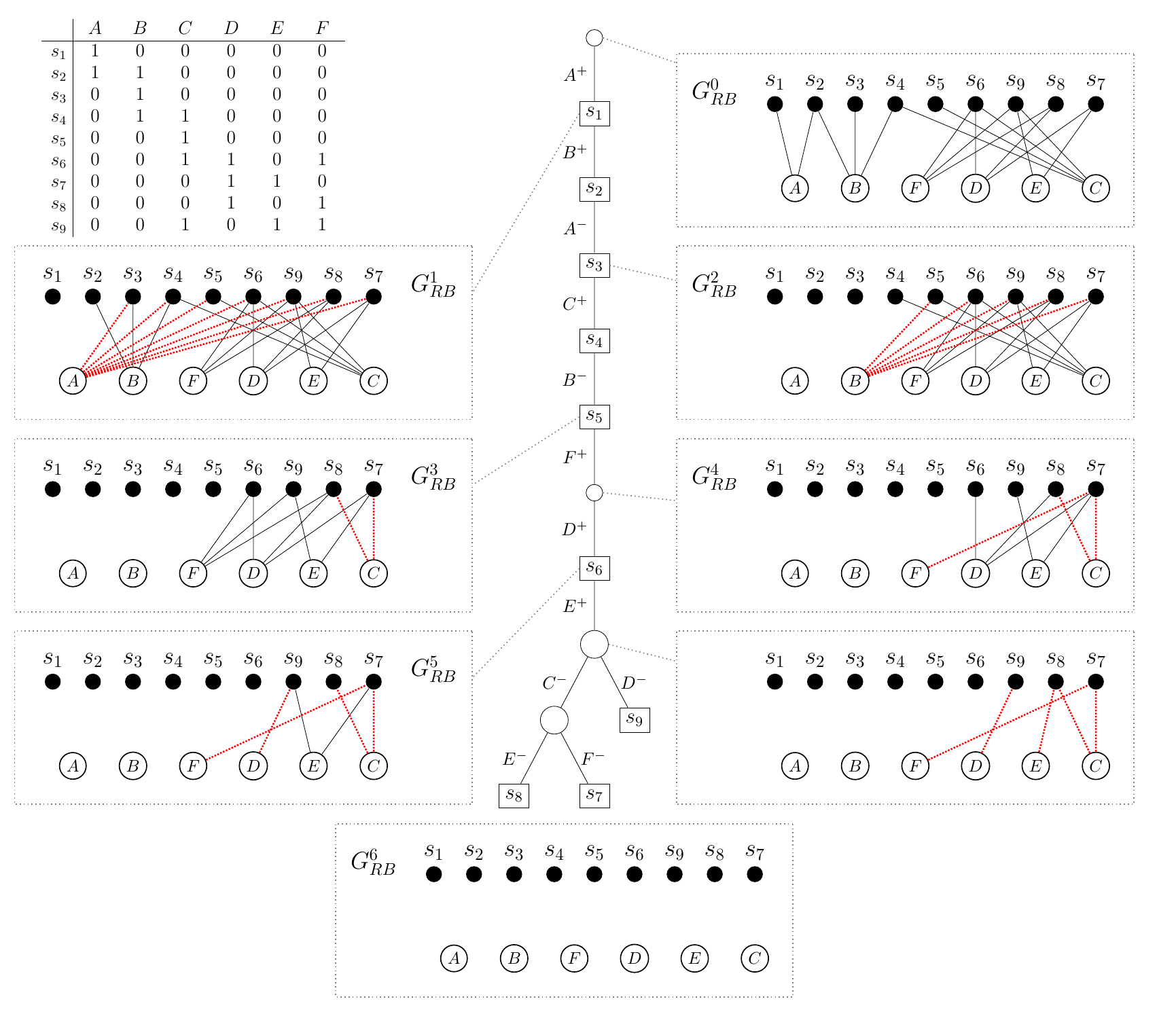}
      \caption{An input binary matrix (top left) and its associated graph (top right). A tree solving the input graph (center) built from the realization of characters according to the reduction $\pi=\langle A,B,C,F,D,E\rangle$. 
    \correction{%
    The \rbgs $G^k_{RB}(\pi)$, $k\in\{1,\ldots,6\}$ are associated with the nodes in the phylogenetic tree and represent partial reductions of the graph.
    Each \rbg $G^k_{RB}(\pi)$ is obtained after the realization of the first $k$ characters according to reduction $\pi$.
    For instance, $\grb^0$ is the black graph encoding the input binary matrix. 
    The graph $\grb^1$ is obtained after the realization of the first character $A=\pi(1)$. 
    In $\grb^1$, all black edges incident to $A$ were removed and red edges were added between $A$ and its non-neighbor species within the same connected component. 
    Note that in $\grb^1$, species $s_1$ becomes isolated and thus labels, in the tree, the node after realizing $A$.
    The realization of $B=\pi(2)$ (the second character of the reduction) isolates the species $s_2$, and makes the node $A$ red-universal, so all red edges incident to $A$ are removed and the node $A$ becomes isolated in $\grb^2$.
    Finally, note that after realizing the last character in the reduction $E=\pi(6)$, the partial reduction becomes non-connected and the tree branches according to the two connected components.
    }
    }  \label{fig:alberoesempio} 
\end{figure}

In \cite{BonizzoniBDT12} the authors prove that a graph has a \dolloone phylogeny if and only if there exists an ordering $\pi=\langle \pi(1),\ldots,\pi(m)\rangle=\langle c_1,\ldots,c_m\rangle$ of its characters  such that the realization of the sequence of characters of $\pi$ reduces the graph to an edgeless one.
More formally, we can state the existence of a \dolloone phylogeny for a graph $G_M$ as follows:

\begin{theorem}[\cite{BonizzoniBDT12}]\label{thm:notredP5} 
A    graph $G_M$   has \dolloone phylogeny if and only if there exists an ordering $\pi=\langle c_1,\ldots,c_m\rangle$ of its characters such that the graph is reduced to an edgeless one. 
\end{theorem}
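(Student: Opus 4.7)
The plan is to prove both directions by constructing an explicit correspondence between Persistent Perfect Phylogenies for $M$ and orderings of characters whose associated realization sequence produces an edgeless graph. The realization operation was designed precisely to simulate the tree dynamics, so the proof amounts to making this simulation precise on both sides.

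For the forward direction $(\Rightarrow)$, I would take a Persistent Perfect Phylogeny $T$ for $M$ and extract an ordering $\pi$ by listing characters in the order their gain edges $c^+$ appear along a top-down traversal of $T$ (e.g.\ BFS from the root, breaking ties arbitrarily). I would then prove by induction on $k$ that the \rbg $G_{RB}^k$ obtained after realizing $c_1,\dots,c_k$ corresponds to the ``residual instance'' attached to the frontier of the portion of $T$ already processed. More precisely, the invariant I want to maintain is: the \emph{inactive} (black) characters in $G_{RB}^k$ are exactly those whose gain edge has not yet been traversed; the \emph{active} (red) characters are exactly those that have been gained on the processed portion of $T$ but whose loss edge (if any) lies in the unprocessed subtree; and a red/black edge $\{c,s\}$ in $G_{RB}^k$ records the fact that the state of $c$ at $s$ still needs to be resolved by either a loss (red) or a gain (black) below the frontier. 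Step (3) of the realization matches the Dollo property that a character lost on all remaining descendant paths disappears from the surviving subtree. When the frontier has moved past every edge of $T$, all characters have been processed, all species have been isolated, and $G_{RB}^m$ is edgeless.

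For the reverse direction $(\Leftarrow)$, I would start from a reduction $\pi = \langle c_1,\dots,c_m\rangle$ and construct $T$ incrementally. Each realization of $c_i$ appends an edge labeled $c_i^+$ to the current node of $T$, each automatic isolation of a red-universal character $c'$ in step (3) appends an edge labeled $c'^-$, and each time a species $s$ becomes isolated its name labels the leaf created at that point. Persistence (each character lost at most once) is automatic because once a red-universal character is isolated in step (3) it is removed from the graph forever, so its loss can only be recorded on one edge; likewise each character is gained exactly once, namely when it is realized in $\pi$. Compatibility of the tree with the matrix (condition (3) of the Dollo-$k$ definition) follows from the invariant above, read from the constructed tree back to the matrix.

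The main technical obstacle is handling branching: realizations are a linear sequence but phylogenetic trees branch. I would address this by observing that whenever $G_{RB}^k$ becomes disconnected, the remaining realizations in $\pi$ split into independent subsequences, one per connected component (characters and species in distinct components never interact during realization). Each such subsequence is treated recursively as a reduction of its own component, producing one of the children subtrees at the current frontier node; the correspondence is then established by induction on the number of species. A secondary subtlety is that step (3) is a while-loop and may isolate several characters in a single realization; I would handle this by making the tree construction create a chain of loss-labelled edges in the order in which the characters become red-universal, which is well defined (though the order among simultaneously red-universal characters is arbitrary and yields equivalent trees).
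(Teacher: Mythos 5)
The paper offers no proof of this statement to compare against: Theorem~\ref{thm:notredP5} is imported verbatim from \cite{BonizzoniBDT12} and used as a black box throughout. Judged on its own, your simulation argument is the natural one and is essentially sound; it establishes the equivalence directly by translating between tree traversals and realization sequences, whereas the cited reference reaches the same characterization via the extended matrix $M_e$ (each character split into $c^+,c^-$ with incomplete entries) and its completions admitting an ordinary perfect phylogeny, with the red-black graph introduced as the bookkeeping device. One step in your forward direction needs to be made explicit. You linearize $T$ by BFS and invoke disconnection of $G_{RB}^k$ to cope with branching, but you only state the reverse-direction half of that fact (a disconnected graph splits the remaining realizations into independent subsequences). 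For the forward direction you need the converse: whenever $T$ branches at a node into subtrees with species sets $S_1$ and $S_2$, the residual red-black graph at that node must itself split into components aligned with $S_1$ and $S_2$; otherwise realizing a character gained in one subtree would add red edges to species of the other, and your frontier invariant would break. This does hold, but only because of the \dolloone bound: an inactive character with black neighbours in both $S_1$ and $S_2$ would have to be gained in both subtrees, and an active character with red neighbours in both would have to be lost in both, either of which violates the model. Adding that short argument closes the only real gap; the remaining points you flag (interleaving of step~(3) losses, labelling of leaves when species become isolated) are handled correctly in your sketch.
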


When realizing characters according to an ordering $\pi$, we generate a sequence of \rbgs that we refer to as \emph{partial reductions}.

\begin{defi}[partial reduction] 
Let $\pi$ be an ordering of the characters of a graph $G_M$.
We denote by $\{\grbk(\pi)\}_{k\in\{0,\ldots,m\}}$ the sequence of \rbgs generated from $G_M$ after realizing the characters according to the ordering $\pi$. 
Starting with $G_M=\grb^0$, the graph $\grbk(\pi)$, called the $k^\text{th}$ \emph{partial reduction of} $G_M$, is obtained after realizing the sequence of characters $\pi(1),\ldots, \pi(k)$.
By an abuse of notation, if the context is clear, we simply denote this sequence by $\{\grbk\}_{k\in\{0,\ldots,m\}}$.
\end{defi}

In the graph $\grbk$, we denote by $\NN^k(c)$ the neighborhood of $c$, and by $\cNN^k(c)$ the set of species that are in the same connected component as $c$, but are not adjacent to $c$.
Note that, before its realization, $c$ is incident only on black edges, and only by red edges thereafter. 
In the later case, we refer to this set as the red neighborhood of $c$. 

In the sequence of partial reductions, $\grb^0=G_M$ is the graph associated with an instance of the problem, and a \dolloone phylogeny exists for the graph if and only if the last partial reduction $\grb^m$ is an edgeless graph. 
In other words, the graph $G_M$ has a solution for the \dolloone problem if all the \rbgs in the sequence of partial reductions can be reduced to a graph with no edges, which motivates the following notion of \emph{reducible graph}.

\begin{defi}[reducible graph]
\label{def:reducible}
A   \rbg \grb is \emph{reducible} if there exists an ordering $\pi$ of its inactive characters, called \emph{reduction}, such that the realization of the characters according to the ordering $\pi$ produces an edgeless graph.  
\end{defi}

\paragraph{\bf Red $\Sigma$-graph.}
As proved in \cite{BonizzoniBDT12}, the characterization in Theorem~\ref{thm:notredP5} can be stated in terms of a forbidden induced subgraph, called a \rs graph, which must not appear in any of the \rbgs in the sequence of partial reductions.
A \rs is a path on red edges composed by two characters and three species. 
The equivalent graph-based characterization of graphs admitting a \dolloone phylogeny is the following:

\begin{lem}[forbidden subgraph]
\label{lem:forbidden}
A   \rbg \grb is  reducible  if there exists an ordering $\pi'$ of its  inactive characters such that the realization of the characters according to the ordering $\pi'$ does not generate a \rbg with an induced \rs.
\end{lem}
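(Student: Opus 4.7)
The plan is to show that the hypothesized ordering $\pi'$ is itself a reduction, so that in particular the last partial reduction $\grb^m(\pi')$ is edgeless. The converse implication is immediate: in any reduction no red $\Sigma$-graph can appear, because if an induced red $\Sigma$-graph on vertices $s_1,c_1,s_2,c_2,s_3$ were ever created, neither of the active characters $c_1,c_2$ could be realized, and neither could be eliminated by step~(3) since $s_3$ (resp.\ $s_1$) would witness that $c_1$ (resp.\ $c_2$) is not red-universal in its connected component; the forbidden configuration would remain forever.

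As a first step I would verify that $\pi'$ defines a valid sequence of realizations. Each character of $\pi'$ starts inactive, and since a realization never turns an inactive character into an active one (only the character currently being realized transitions from inactive to active), each $c_k\in\pi'$ is, at its turn, either still inactive (and its realization applies) or already isolated by a previous cascade (in which case its realization is vacuous).

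The core of the argument is then to prove by contradiction that $\grb^m(\pi')$ is edgeless. Suppose instead that it contains a connected component $D$ with at least one edge. Every character in $D$ has only red edges, because every originally inactive character has been realized and has thus lost its black edges. Moreover, no character of $D$ is red-universal in $D$, because step~(3) of the last realization iterates until no red-universal character remains. A short connectivity argument then gives $|S\cap D|\geq 3$: with only two species, any character in $D$ would have to be adjacent to exactly one of them, which contradicts $D$ being connected (no path between the two species would exist through such characters).

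The main obstacle, and the key structural step, is to exhibit an induced red $\Sigma$-graph inside $D$, contradicting the hypothesis on $\pi'$. I would prove the following fact: in a connected bipartite graph on species $S'$ and characters $C'$ with $|S'|\geq 3$ and no $c\in C'$ adjacent to all of $S'$, there is an induced $P_5$ with both endpoints in $S'$. Fix any $c_1\in C'$ and a non-neighbor $s_3\in S'$, and take a shortest $c_1$-to-$s_3$ path; it has odd length at least $3$. If the length is at least $5$, the last five vertices of this shortest path already form an induced $P_5$ with species endpoints (shortest paths are induced). Otherwise the shortest path is $c_1,s_2,c_2,s_3$, and either there exists $s_1\in N(c_1)\setminus N(c_2)$, in which case $s_1,c_1,s_2,c_2,s_3$ is the desired induced $P_5$, or $N(c_1)\subsetneq N(c_2)$ and the same procedure is iterated starting from $c_2$ and one of its non-neighbors. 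Since every iteration produces a strictly larger neighborhood while $|S'|$ is finite, the process must terminate by returning the desired induced $P_5$; otherwise the chain of strictly increasing neighborhoods would eventually yield a red-universal character, contradicting the hypothesis on $C'$.
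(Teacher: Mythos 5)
Your proof is correct, but note that the paper itself does not prove this lemma at all: it is stated as imported from \cite{BonizzoniBDT12} (``As proved in [BonizzoniBDT12]\dots''), so there is no in-paper argument to compare against. What you supply is a self-contained derivation, and its one genuinely non-trivial ingredient is the structural claim that a connected bipartite component with at least one edge, at least three species, and no character adjacent to all species of the component must contain an induced $P_5$ with both endpoints in the species side. Your proof of that claim is sound: a shortest character-to-non-neighbor path of length at least $5$ is itself induced and yields the $\Sigma$-graph; in the length-$3$ case you either exhibit the fifth vertex directly or obtain a strict neighborhood inclusion $\NN(c_1)\subsetneq \NN(c_2)$, and the resulting strictly increasing chain of neighborhoods must terminate (it cannot reach a universal character), at which point the $P_5$ appears. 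Combined with your observations that in the final graph $\grb^m(\pi')$ every edge of a nonempty component is red, that no character is red-universal there (step~3 of the last realization), and that such a component has at least three species, this correctly forces a \rs in a graph generated by $\pi'$, contradicting the hypothesis; the converse direction (a \rs, once created, persists because $c_1,s_2,c_2$ stay in one component with $s_1$ and $s_3$, so neither character can ever become red-universal) is also argued correctly. Two cosmetic remarks: the aside about a character of $\pi'$ being ``already isolated by a previous cascade'' is vacuous, since realizations never remove black edges of other characters, so an inactive character stays inactive until its own turn; and your $|S\cap D|\ge 3$ argument should also dispose of the one-species case (a character with an edge would be red-universal), which is immediate but worth a clause.
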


This result implies that deciding whether a \rbg \grb is reducible is equivalent to the problem of deciding whether such graphs can be represented as a \dolloone phylogeny. 
In other words, the \dolloone problem can be stated as a graph recognition problem.
Note that the complexity of the problem is not straightforward, since we must guarantee the existence of a sequence of \rbgs avoiding the forbidden structure (a \rs) throughout all \rbgs in the sequence.

Our recognition algorithm builds a reduction one operation at a time by identifying, in a reducible \rbg, an inactive character $c$ whose realization results in another reducible \rbg.
Such a character $c$ is called \emph{safe} in \grb.\\

\begin{defi}[Safe character]
  Given a reducible \rbg~\grb, a character $c$ is \emph{safe} in the graph \grb if its realization results in a reducible \rbg. 
\end{defi}

To simplify the discussion, in the following and without loss of generality, we assume that no two nodes in the graph $G_M$ have the same neighborhood since they represent equivalent species/characters.

\subsection{Structure of the paper and outline of the main results}

The structure of the paper is detailed below by giving an outline of the main properties and related steps that lead to the polynomial algorithm for finding a reduction.

\begin{description}

\item[Section~\ref{sec:extension_properties}] presents some
\textbf{fundamental  properties} that underlie all the main results of the paper.
\correction{Specifically, we describe a necessary condition for a character to be safe and the necessary conditions under which a reduction can be modified to obtain another reduction.} 
A crucial result is Proposition~\ref{prop:universalTopBottom} which states that an inactive character can be realized in a partial reduction  \grbk when it contains or is disjoint from the neighborhood of any other active character.

\item[Section~\ref{sec:maximal}] defines maximal black graphs and shows that in a connected maximal graph \textbf{partial reductions are connected} (\correction{Proposition}~\ref{prop:no_new_components}).

The rest of the section provides the results allowing the construction of a reduction for a \rbg.

\textbf{Section~\ref{sec:Gbr_representation}}  \correction{introduces} the concept of \correction{an} {\bf $S$-partition and \correction{a} $C$-partition} of \correction{the} set of species and \correction{the set of} characters in a \rbg.
Species nodes are partitioned into two sets $\VSB$  and $\VSR$, respectively\correction{,} corresponding to the species that are incident only on black edges, and species that are incident on at least \correction{one} red edge.
Similarly, character nodes are partitioned into $\VCont, \VInt$, and $\VUni$,
which are respectively 
the sets of inactive characters whose set of species is \textbf{contained} in $\VSR$, is \textbf{intersecting} with $\VSR$ and does not contain $\VSR$, and is \textbf{universal} in $\VSR$.
These partitions provide a first rough order of the characters in the construction of a reduction, as proved in Proposition~\ref{prop:order_sets}, which states that characters in \VInt and \VUni must be realized before the ones in \VCont. 

\textbf{Section~\ref{sec:reduction_VInt_VUni}} presents the results to refine the \textbf{order of character in a reduction within the set $\VInt\cup\VUni$}. %
 Proposition~\ref{prop:NN_VB2} proves that a subset of the characters in this set can be ordered according to a containment relation $\pi_U$ which \correction{allows} to individuate the next characters in a reduction.
 This section ends with Theorem \ref{thm:safe_character} summarizing the previous results: 

  \begin{enumerate}
  \item If $(\VInt=\emptyset \wedge \VUni=\emptyset)$ then all characters in $\VCont$ are safe.
  \item If $(\VInt=\emptyset \wedge \VUni\neq\emptyset)$ then a character $c$ is safe if and only if it is safe in the subgraph induced by $\VSB\cup \VUni$. 
  \item If $(\VInt\neq \emptyset)$ then every maximal character of the containment relation $\pi_U$ is safe.
  \end{enumerate}

  \textbf{Section~\ref{sec:initial}} provides some conditions on the set of \textbf{characters starting a reduction}, showing that initial characters can be chosen as the ones belonging to a species of minimum degree (\correction{Proposition}~\ref{prop:first-minimal-species}).

  \item[Section~\ref{sec:algo_1}] \correction{describes} the recognition algorithm and proves its correctness and its polynomial complexity. 

  \item[Section~\ref{sec:conclusions}] concludes the paper discussing the results, applications and future work.

\end{description}





\section{Universal characters and ordering in a reduction}
\label{sec:extension_properties}

The following proposition provides a first necessary condition for the extension of a partial ordering of characters along the construction of a reduction.
%
%
More precisely, it states that to prevent the generation of a \rs, the neighborhood of an inactive character that is realized must contain either the neighborhood or the non-neighborhood of active characters in its same connected component (Proposition~\ref{prop:universalTopBottom}).
%
%
An illustration of this property is given in Figure \ref{fig:universalTopBottom}.

 \begin{figure}[htb]
 \centering
       \includegraphics[width=.5\linewidth]{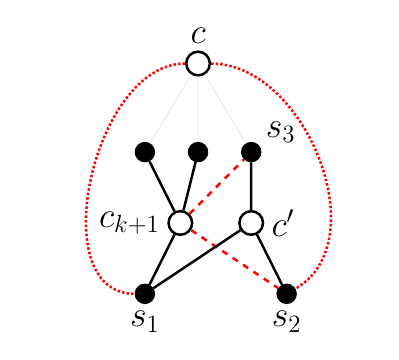}
      \caption{Proposition \ref{prop:universalTopBottom}
        provides a necessary condition for the realization of a character in a given graph \grbk.
       The red dashed edges depict the non-neighborhood of inactive character $c_{k+1}$. 
 Observe that character $c'$ can be realized since its neighborhood contains $\NN^k(c)$. 
 This fact implies that the realization of $c'$ produces additional red-edges in the graph  that are disjoint from those in $\NN^k(c)$, and thus no  \rs is generated. 
 Contrarily, the realization of $c_{k+1}$ generates a \rs, since  $c_{k+1}$  neighborhood does not contain $\NN^k(c)$ nor the complement of $\NN^k(c)$.}
    \label{fig:universalTopBottom}
  \end{figure}

\begin{prop}\label{prop:universalTopBottom} 
  Let $\pi=\langle c_1,\ldots,c_m\rangle$ be a reduction of a graph $G_M$.
  Let $c$ be an active character in \grbk such that $c$ is in the same connected component of $c_{k+1}$, which is the next character to be realized in $\grbk$ according to $\pi$.
  If $c_{k+1}$ has at least one neighbor in $\NN^k(c)$ then $\NN^k(c_{k+1})$ must contain either $\NN^k(c)$ or its complement $\cNN^k(c)$.    
\end{prop}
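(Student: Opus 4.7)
The plan is to argue by contradiction, showing that if the conclusion fails then the realization of $c_{k+1}$ produces an induced \rs on five explicit vertices, contradicting Lemma~\ref{lem:forbidden}. So suppose $\NN^k(c_{k+1})$ contains neither $\NN^k(c)$ nor $\cNN^k(c)$. Then I can pick species $s_1 \in \NN^k(c) \setminus \NN^k(c_{k+1})$ and $s_3 \in \cNN^k(c) \setminus \NN^k(c_{k+1})$, while the hypothesis supplies $s_2 \in \NN^k(c) \cap \NN^k(c_{k+1})$. These three species are pairwise distinct: $\NN^k(c)$ and $\cNN^k(c)$ are disjoint by definition, so $s_3 \neq s_1, s_2$, and $s_2 \in \NN^k(c_{k+1})$ while $s_1 \notin \NN^k(c_{k+1})$ separate $s_1$ from $s_2$. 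Moreover $s_3$ lies in the connected component of $c$, which equals that of $c_{k+1}$ since the two share $s_2$.

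Next I would track the edges among $\{c, c_{k+1}, s_1, s_2, s_3\}$ through the realization of $c_{k+1}$ as described in Definition~\ref{def:realization}. Because $c$ is already active, its red edges to $s_1$ and $s_2$ are untouched. Step~2 removes the black edge $(c_{k+1}, s_2)$, and step~1 adds red edges $(c_{k+1}, s_1)$ and $(c_{k+1}, s_3)$, since $s_1, s_3$ lie in the connected component of $c_{k+1}$ but not in $\NN^k(c_{k+1})$. Combined with the non-edge $(c, s_3)$ guaranteed by $s_3 \in \cNN^k(c)$, the induced subgraph on these five vertices is precisely a red $P_5$ of the form $s_2 - c - s_1 - c_{k+1} - s_3$, i.e., an induced \rs.

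The main subtlety will be step~3 of the realization, which iteratively isolates red-universal characters and could in principle erase the forbidden subgraph. Here I would observe that $c, c_{k+1}, s_1, s_2, s_3$ remain in a single connected component (they are linked by the path above), and within that component $s_3 \notin \NN^k(c)$ prevents $c$ from ever being red-universal while $s_2 \in \NN^k(c_{k+1})$ prevents $c_{k+1}$ from ever being red-universal. Isolating any other character does not affect edges incident to $c$ or $c_{k+1}$, so the induced \rs survives the entire realization. This contradicts the assumption that $\pi$ is a reduction and completes the argument. The most delicate bookkeeping lies in this last check around the red-universal cleanup; everything else is a short pigeonhole among $\NN^k(c)$, $\cNN^k(c)$, and $\NN^k(c_{k+1})$.
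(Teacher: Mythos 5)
Your proposal is correct and follows essentially the same argument as the paper's proof: assume the containment fails, extract by pigeonhole a common neighbor, a neighbor of $c$ missed by $c_{k+1}$, and a species outside both neighborhoods, and exhibit the resulting induced \rs on those five vertices after realizing $c_{k+1}$ (your $s_1,s_2$ are the paper's $s_2,s_1$). The only difference is that you additionally verify that step~3 of Definition~\ref{def:realization} cannot erase the forbidden subgraph --- a detail the paper's proof leaves implicit --- and your handling of it is sound.
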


\begin{proof}
Observe that $\NN^{k}(c)$ is not empty, since by hypothesis, $c_{k+1}$ has at least one neighbor in $\NN^k(c)$, which we denote by $s_1$ \correction{(see Figure~\ref{fig:universalTopBottom})}.
Moreover, $\cNN^{k}(c)$ cannot be empty, since otherwise, $c$ would be a (red) universal \correction{character} in the species set, and it would have been \removed from the \rbg.

Suppose, for the sake of contradiction, that $\NN(c_{k+1})$ contains neither $\NN^{k}(c)$ nor $\cNN^{k}(c)$.
Since $\NN^k(c_{k+1})$ does not contain $\NN^{k}(c)$, there exists a species $s_2$ in $\NN^k(c) \cap \cNN^k(c_{k+1})$ \correction{(see Figure~\ref{fig:universalTopBottom})}.
Now, since $\NN^k(c_{k+1})$ does not contain $\cNN^{k}(c)$ (which is not empty because $c$ is not red universal), it follows that there exists a species $s_3$ in $\cNN^k(c_{k+1}) \cap \cNN^k(c)$.

We conclude that after the realization of $c_{k+1}$, the set $\{s_1, c, s_2, c_{k+1}, s_3\}$ induces a \rs in the graph $G_{RB}^{k+1}$ (see Figure~\ref{fig:universalTopBottom}), which contradicts the fact that the ordering $\pi$ is a reduction of the graph.
Thus, $\NN^k(c_{k+1})$ must contain either $\NN^k(c)$ or $\cNN^k(c)$.
\end{proof}

 \begin{figure}[htb]  
 \centering
       \includegraphics[width=.5\linewidth]{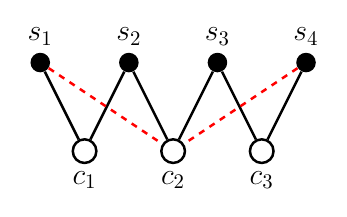}
  \caption{In the figure, white nodes represent characters, black nodes represent species, and red dashed edges depict the non-neighborhood of the \unrealized character $c_2$. 
  In an induced path $P_7$ containing four species and three characters, the central character can not be the first one to be realized.   
}
       \label{fig:centerP7}
  \end{figure}


A direct consequence of Proposition~\ref{prop:universalTopBottom} is the following corollary, illustrated in Figure~\ref{fig:centerP7}.

\begin{cor}\label{cor:centerP7}
  Let $P_7=\{s_1\, c_1\, s_2\, c_2\, s_3\, c_3\, s_4\}$ be a set of four species and three characters inducing a path in a graph \gm.
  Then in any reduction of \gm, the central character $c_2$ of the induced path can not be the first to be realized.
\end{cor}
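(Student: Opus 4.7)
The plan is to argue by contradiction: suppose $\pi$ is a reduction of $\gm$ with $\pi(1)=c_2$, and apply Proposition~\ref{prop:universalTopBottom} at the moment when $c_1$ or $c_3$ is first realized after $c_2$, with $c_2$ playing the role of the active character.

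I would set $j\geq 2$ to be the least index with $\pi(j)\in\{c_1,c_3\}$. By the symmetry of the path (swap $s_1\leftrightarrow s_4$, $c_1\leftrightarrow c_3$, $s_2\leftrightarrow s_3$) I may assume $\pi(j)=c_1$. After the initial realization of $c_2$, the black edges $c_2s_2,c_2s_3$ are deleted and red edges $c_2s_1,c_2s_4$ are created, since $s_1$ and $s_4$ lie in $c_2$'s connected component (being linked to $c_2$ by the $P_7$ itself).

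The bulk of the work is then to verify three structural facts about $G_{RB}^{j-1}$. The first and most delicate is that \emph{$c_2$ is still active}: its red neighborhood was frozen at the moment of its realization, and its red edges can shrink only when $c_2$ becomes red-universal, so it is enough to exhibit a species sitting in $c_2$'s component but without a red edge to $c_2$. The species $s_2$ works, because the black edge $c_1s_2$ persists (since $c_1$ is still inactive by the minimality of $j$) and thus places $s_2$ in $c_2$'s component via $s_2\,c_1\,s_1\,c_2$. The second fact, $s_1,s_4\in\NN^{j-1}(c_2)$, follows because $c_2$ is active so its incident red edges survive, and neither $s_1$ nor $s_4$ has been isolated (each keeps a black edge to the still-inactive $c_1$, resp.\ $c_3$). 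The third, $s_3\in\cNN^{j-1}(c_2)$, holds because $s_3$ has no edge to $c_2$ and lies in $c_2$'s component through $s_3\,c_3\,s_4\,c_2$, again since $c_3$ is inactive.

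Finally I would invoke Proposition~\ref{prop:universalTopBottom} with $c=c_2$, $k=j-1$ and $c_{k+1}=c_1$. The black edge $c_1s_1$ still exists and $s_1\in\NN^{j-1}(c_2)$, so the hypothesis holds and the conclusion forces $\NN(c_1)\supseteq\NN^{j-1}(c_2)$ or $\NN(c_1)\supseteq\cNN^{j-1}(c_2)$. Both inclusions fail: $s_4\in\NN^{j-1}(c_2)\setminus\NN(c_1)$ blocks the first and $s_3\in\cNN^{j-1}(c_2)\setminus\NN(c_1)$ blocks the second, because the $P_7$ is induced and therefore $c_1$ is adjacent within it only to $s_1$ and $s_2$. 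This contradicts that $\pi$ is a reduction. The one genuine obstacle is the first structural fact, since characters and species can in principle disappear during intermediate realizations; the minimality of $j$ is precisely what guarantees that the witnesses $s_2$, $c_1$ and $c_3$ survive up to step $j-1$.
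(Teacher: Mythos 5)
Your proposal is correct and follows essentially the same route as the paper: assume $c_2$ is realized first and invoke Proposition~\ref{prop:universalTopBottom} with $c_2$ as the active character to show that $c_1$ (and symmetrically $c_3$) can never subsequently be realized, since $s_4$ and $s_3$ block both required inclusions. The paper states this in one line; your version carefully verifies the survival of the witnesses ($c_2$ staying active, $s_1,s_4\in\NN^{j-1}(c_2)$, $s_3\in\cNN^{j-1}(c_2)$) up to step $j-1$, which the paper leaves implicit, and all of these checks are sound.
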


\begin{proof}
  By contradiction, assume that $c_2$ is the first character among $\{c_1,c_2,c_3\}$ to be realized; then neither $c_1$ nor $c_3$ can be realized according to Proposition~\ref{prop:universalTopBottom}.
\end{proof}

Notice that in a reduction of a graph, it is possible to locally modify the ordering of the characters to obtain an alternative and equivalent reduction. 
For example, in Figure~\ref{fig:alberoesempio}, the characters $F$ and $D$ can be swapped to obtain a different reduction of the graph.
The following \correction{lemma} provides a sufficient condition for the swapping of consecutive nodes in a reduction.
When restated in terms of their associated phylogenetic tree, this property comes from the fact that these characters induce a path containing only positive labels.
In this path, no species are realized, and no internal node has more than one descendant.


 

\begin{lem}[swapping]
\label{lemma:swaporder} 
Let $\pi$ be a reduction of a graph $G_M$.  
  For $1 \le k < m$, let $c_{k}$ be the $k$-th character in the reduction $\pi=\langle c_1,\ldots, c_k, c_{k+1},\ldots, c_m\rangle$. 
  Assume that the connected components of $G_{RB}^{k-1}$ and \grbk  contain the same set of vertices, that is, the realization of $c_{k}$ does not generate new connected components and does not \remove any species in $\grbk$.
  Then the ordering $\pi'=\langle c_1,\ldots, c_{k+1},  c_k,\ldots, c_m\rangle$,  where  $c_k$ is  swapped  with $c_{k+1}$, is also a reduction of $G_M$.
\end{lem}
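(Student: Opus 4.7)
The plan is to show that $\pi'$ is a reduction of $G_M$ by establishing three facts in order: (i) the character $c_{k+1}$ has the same incidences and the same connected component of species in $G_{RB}^{k-1}$ as in $G_{RB}^{k}$; (ii) realizing $c_{k+1}$ in $G_{RB}^{k-1}$ does not create an induced red $\Sigma$-graph; and (iii) realizing $c_k$ in the resulting graph yields exactly $G_{RB}^{k+1}(\pi)$. Once these are in place, the remaining characters $c_{k+2},\ldots,c_m$ reduce $G_{RB}^{k+1}(\pi)$ to the edgeless graph by hypothesis, so $\pi'$ is a reduction.

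For (i), I would observe that the realization of $c_k$ only modifies edges incident to $c_k$ itself or to characters that become red-universal during cleanup. Since $c_{k+1}$ is inactive in $G_{RB}^{k}$ (it is the next character realized under $\pi$), it cannot have been cleaned up, so its (black) edges coincide in $G_{RB}^{k-1}$ and $G_{RB}^{k}$. Combined with the hypothesis that the two graphs share the same partition into connected components, this shows that the set of new red edges created at $c_{k+1}$ is the same whether it is realized in $G_{RB}^{k-1}$ or in $G_{RB}^{k}$.

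For (ii), suppose by contradiction that an induced red $\Sigma$-graph on $\{s_1,c,s_2,c',s_3\}$ appears after realizing $c_{k+1}$ in $G_{RB}^{k-1}$. Since only edges incident to $c_{k+1}$ are modified by this realization, one of $c,c'$ must equal $c_{k+1}$, say $c=c_{k+1}$. The character $c'$ is therefore already active in $G_{RB}^{k-1}$, hence $c'\neq c_k$ (which is inactive in $G_{RB}^{k-1}$), and by the hypothesis on preservation of components $c'$ is not cleaned up in the passage from $G_{RB}^{k-1}$ to $G_{RB}^{k}$; its incidences to $s_1,s_2,s_3$ are therefore preserved. Together with (i), this implies that the same five vertices induce a red $\Sigma$-graph in $G_{RB}^{k+1}(\pi)$, contradicting that $\pi$ is a reduction via Lemma~\ref{lem:forbidden}. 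The delicate point is to verify that the red-universal cleanup does not destroy the $\Sigma$-graph in one ordering but not in the other; this is handled by noting that red-universality depends only on a character's red edges and the species in its connected component, and these data coincide in both graphs at the relevant stage.

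For (iii), I would argue that the non-cleanup parts of the realizations of $c_k$ and $c_{k+1}$ act on disjoint sets of edges (those incident to $c_k$ and those incident to $c_{k+1}$, respectively), and that the hypothesis on $c_k$ makes the component structure used to determine the new red edges identical under the two orderings. The main obstacle — and the essential technical step in both (ii) and (iii) — is to show that the iterative red-universal cleanups produce the same fixed point regardless of whether $c_k$ or $c_{k+1}$ is realized first, exploiting that red-universality is a local, order-independent property of the red-edge configuration within each connected component.
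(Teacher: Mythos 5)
Your steps (i) and (ii) are sound, and in fact more explicit than the paper's own proof about why no red $\Sigma$-graph can appear in the intermediate graph obtained by realizing $c_{k+1}$ directly in $G_{RB}^{k-1}$. The genuine gap is in step (iii): the claim that realizing $c_{k+1}$ and then $c_k$ yields \emph{exactly} $G_{RB}^{k+1}(\pi)$ is false in general. The hypothesis of the lemma constrains only the realization of $c_k$; it says nothing about what the realization of $c_{k+1}$ may do. Realizing $c_{k+1}$ first in $G_{RB}^{k-1}$ can isolate a species $s$ whose only neighbour is $c_{k+1}$ (with $s\notin N(c_k)$): under $\pi$ such an $s$ first acquires a red edge to $c_k$ and therefore survives the realization of $c_{k+1}$, whereas under $\pi'$ it is isolated before $c_k$ is realized, so $c_k$ never becomes adjacent to it. The two final graphs then differ by the red edge $(c_k,s)$, and your assertion that ``the component structure used to determine the new red edges is identical under the two orderings'' breaks down precisely here: the component of $c_k$ after $c_{k+1}$ has been realized need not coincide with its component in $G_{RB}^{k-1}$, because the hypothesis guarantees component preservation only for the realization of $c_k$, not of $c_{k+1}$.

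The repair is the one the paper adopts: do not aim for equality, but show that the graph produced by $\langle c_1,\ldots,c_{k+1},c_k\rangle$ is a (possibly proper) subgraph of $G_{RB}^{k+1}(\pi)$ --- every species isolated after both realizations under $\pi$ is also isolated after both realizations under $\pi'$, and no extra edges appear. One then argues that each subsequent partial reduction under $\pi'$ remains contained in the corresponding one under $\pi$, so the forbidden red $\Sigma$-graph cannot arise later and the final graph is still edgeless. With step (iii) weakened to this containment statement your argument closes; as written, the exact-equality claim is a step that would fail.
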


\begin{proof} 
    Let $\grb$  and $\grb'$ be the red-black graphs associated with  the (partial)  orderings  $\langle c_1,\ldots, c_k,c_{k+1}\rangle$ and $\langle c_1,\ldots, c_{k+1}, c_k\rangle$ respectively.
    Since the vertices of the connected components of $\grb^{k-1}$ and $\grb^k$ are equal, we have that the species nodes in $\grb$ that have not been \removed are exactly the ones in $\grb^{k-1}$ but the ones \removed by the realization of both $c_k$ and $c_{k+1}$.
    Notice that these species are also \removed from $\grb'$.
    Thus, $\grb'$ is either equal to $\grb$ or a proper subgraph.

    By the definition of realization, we notice that from this point, each \rbg in the sequence of red-black graphs generated by $\pi'$ is contained in the ones generated by the sequence $\pi$.
    Thus, we know that no \rs is generated by $\pi'$, and therefore it is a reduction of $G_M$.
\end{proof}

 \section{Maximal graphs: computing a reduction}\label{ordering}
\label{sec:maximal}


\begin{defi}[Maximal graphs]
\label{par:maximal_graph}
A graph $G_M$ associated with an input matrix $M$ is {\em maximal} if it contains only maximal characters, that is, characters whose neighborhood in the graph is not contained in the neighborhood of any other character.  
\end{defi} 

In the following, we will consider only maximal connected graphs. 
In this case, we have the following \correction{proposition}.



\begin{prop}
  \label{prop:no_new_components}
    Let $G_M$ be a maximal connected reducible graph.
    Then for any reduction $\pi$ and for every $1\le k <m$, the partial reduction $\grbk(\pi)$ \correction{contains a single connected component} and has at least \correction{one} active character.
\end{prop}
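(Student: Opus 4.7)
My plan is to prove the proposition by induction on $k$, establishing the two claims (connectedness and the existence of at least one active character) jointly.

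For the base case $k=1$, I realize $c_1$ in the maximal connected graph $G_M$. Because $G_M$ is maximal, for every character $c'\neq c_1$ we have $N(c')\not\subseteq N(c_1)$, so $c'$ admits at least one species $s'\in \cNN(c_1)$. After the realization, $c_1$ is red-adjacent to every species of $\cNN(c_1)$, which immediately connects every other character $c'$ to $c_1$ through the path $c'$--$s'$--$c_1$; moreover, any non-isolated species $s\in N(c_1)$ keeps some neighbor $c'\neq c_1$, and $c'$ in turn is linked to $c_1$ via a species in $\cNN(c_1)$. This establishes that $G_{RB}^{1}$ has a single non-trivial connected component. The same nested structure forces the existence of at least one non-isolated species in $N(c_1)$ (which exists because $m\geq 2$ and $G_M$ is connected), so $c_1$ cannot be red-universal after step~2 and therefore stays active.

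For the inductive step, assume $\grbk$ has the desired property and $k+1<m$. Because $\grbk$ need not be maximal, the base-case argument cannot be reused verbatim; instead, I appeal to Proposition~\ref{prop:universalTopBottom} applied to an active character $c^*$ provided by the induction hypothesis. The proposition forces that either $\NN^k(c_{k+1})\supseteq \NN^k(c^*)$, or $\NN^k(c_{k+1})\supseteq \cNN^k(c^*)$, or the two neighborhoods are disjoint. Combining this nested/disjoint relation with the connectedness of $\grbk$, the red edges added in step~1 of the realization of $c_{k+1}$ link the surviving vertices back into a single component. For the active-character part, if $c_{k+1}$ itself survives step~3 we are done; otherwise, I argue that at least one previously active character survives the cascade of red-universal removals, since a complete collapse would leave only inactive characters in $G_{RB}^{k+1}$ and would contradict the fact that the sub-reduction $c_{k+2},\ldots,c_m$ is forced to continue through a graph avoiding the forbidden subgraph of Lemma~\ref{lem:forbidden}.

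The main obstacle is precisely the inductive step, because $\grbk$ no longer inherits maximality and the cascade of red-universal removals in step~3 is subtle to control. The key tool I rely on to bridge this gap is Proposition~\ref{prop:universalTopBottom}: it enforces a laminar/nested structure on the neighborhoods of active characters along any valid reduction, and this laminar structure serves as the substitute for the maximality lost when passing from $G_M$ to $\grbk$, preserving both connectivity and an active character through each realization.
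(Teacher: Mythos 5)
Your induction-on-$k$ strategy is genuinely different from the paper's proof, which argues both claims by global contradictions (no induction): for the existence of an active character it shows that a partial reduction with no active characters would split the character set into two species-disjoint families, contradicting the connectivity of $G_M$; for connectedness it takes the \emph{first} $k$ at which a split occurs and observes that any component $D$ avoided by an active character $c$ consists entirely of species originally adjacent to $c$, so every inactive character of $D$ would have its neighborhood contained in that of $c$, contradicting maximality. A correct induction would be acceptable, but your inductive step --- which you yourself identify as ``the main obstacle'' --- is asserted rather than proved, and this is a genuine gap. Proposition~\ref{prop:universalTopBottom} only constrains $\NN^k(c_{k+1})$ relative to the neighborhood of \emph{one} active character $c^*$; the sentence ``the red edges added in step~1 \ldots link the surviving vertices back into a single component'' does not follow from that trichotomy, and it says nothing about step~3, the cascade of red-universal removals, which is precisely where species and characters get isolated and where a disconnection could arise. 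You acknowledge the cascade is ``subtle to control'' and then do not control it; the laminar structure of active neighborhoods is offered as a ``substitute for maximality,'' but no argument is given that it actually yields connectedness, whereas the paper's proof uses maximality of the \emph{inactive} characters in the split-off components in an essential way.

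The active-character half of your inductive step is moreover based on an incorrect claim. A partial reduction containing only inactive characters is a plain black graph with no red edges, so it trivially contains no red $\Sigma$-graph; nothing in Lemma~\ref{lem:forbidden} forbids the reduction from continuing through such a graph, so the ``contradiction'' you invoke does not exist. The real obstruction is the one the paper uses: if every active character collapsed, the already-realized characters and the still-inactive ones would share no species (the former have lost all their species, the latter have lost none), so $G_M$ would have been disconnected. Your base case, by contrast, is essentially sound (maximality puts a species of every other character into $\cNN(c_1)$, which becomes red-adjacent to $c_1$, and connectivity of $G_M$ with $m\ge 2$ prevents $c_1$ from becoming red-universal), but it is the only part of the induction that is actually carried out.
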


\begin{proof}
We begin by proving that in every partial reduction, there exists at least one active character.
By contradiction, assume that there exists a reduction of $G_M$ such that for a given $k\in [1,m-1]$, the partial reduction $\grb^k$ has no active characters. 
This means that all the characters in the set $\{c_1,\ldots,c_k\}$, together with their species, have already been \removedfrom $\grb^k$. 
Contrarily, characters in the set $\{c_{k+1},\ldots,c_m\}$ are inactive, meaning that none of their species have been \removedfrom $\grb^k$. 
We conclude that these two sets of characters are independent, meaning they do not share any species.
However, this implies that the former graph $G_M$ was not connected, which contradicts the initial hypothesis.

\correction{In order to prove the proposition, let us assume by contradiction that there exists a partial reduction $\grb^k$ containing more than one connected component}. %
Moreover, let $k$ be the smallest value for which this happens.

We know that each connected component in $\grbk$ must have at least one inactive character. Otherwise, there would exist a connected component formed exclusively by active characters that have not been \removed, and thus it would contain a \rs.
On the other hand, by the previous part, there exists one connected component in $\grbk$ containing an active character $c$. 
Since the graph $\grb^{(k-1)}$ \correction{contains a single connected component} by definition, it means that $c$ was universal in each of the other connected components. 
But this contradicts the maximality of the remaining inactive characters in those components.

\end{proof}

\begin{rmk}\label{obs:initial_path}
\correction{In terms of the phylogenetic tree, Proposition~\ref{prop:no_new_components} implies that every phylogenetic tree of a maximal graph must start with a path containing all character gains and potentially some character losses. 
Indeed, since all partial reductions contain a single connected component, the induced tree does not branch until all characters have been gained. 
However, during character realization, some characters may become isolated and thus be lost in the tree.}
\end{rmk}

\correction{For the sake of simplicity, in the following we will say that the partial reductions $\grbk(\pi)$, $1\le k <m$ are connected.
By Proposition~\ref{prop:no_new_components}, they contain a single connected component; therefore, if isolated species and characters are omitted, the resulting induced graph is connected.}


 \subsection{Partition of the character nodes and general structure of a reduction}
\label{sec:Gbr_representation}

In this section, we introduce a representation of \rbgs that allows to state properties used to characterize the order of characters in a reduction.

\begin{defi}[$S$-partition]
\label{def:species-partition}
  Let \grb be a connected \rbg and let $S$ be the set of species of \grb. Then  the $S$-partition of the set $S$ consists of the following two sets:\\
  - $\VSB(\grb)$  is the set of vertices that are incident exclusively on \textbf{black edges}, and\\
  - $\VSR(\grb)$  is the set of vertices that are incident on \textbf{at least one red edge}.
\end{defi}  
 If the context is clear, and by abuse of notation, we will denote these sets by  $\VSB$ and  $\VSR$ respectively.

\begin{defi}[$C$-partition]
\label{def:char-partition}
Let \grb be a connected \rbg and let $\VCR(\grb)$ be the set of active characters, while $\VCB (\grb)$ is the set of \unrealized characters. Then  the $C$-partition of the set of inactive characters $\VCB (\grb)$ consists  of the sets:  
    \begin{itemize}
        \item[(a)]  $\VCont (\grb)$ is the set of characters whose neighborhood is \textbf{contained} in the set $\VSR$,
        \item[(b)] $\VInt (\grb)$ is the set of  characters with a neighbor and a non-neighbor in \VSR;  we say that such characters \textbf{intersect} both sets \VSB and \VSR,   and   
        \item[(c)] $\VUni (\grb)$  is the set of characters  with a neighbor in $\VSB$ and that are \textbf{universal} for the set  $\VSR$.
        \end{itemize}
\end{defi}

As before, if the context is clear, we denote these sets by $\VCont,\VInt,\VUni$ respectively. 
Similarly, $\VCR$ and $\VCB$ will denote respectively the set of active and inactive characters.

Figure~\ref{fig:general_Grb} depicts the general structure of a \rbg according to the defined partition.
\correction{Moreover, in Table~\ref{tab:alg:execution} can be found the definition of these sets for some of the partial reductions depicted in the example of Figure~\ref{fig:alberoesempio}.}  


 \begin{figure}[htb]  
  \centering
    \includegraphics[width=.65\linewidth]{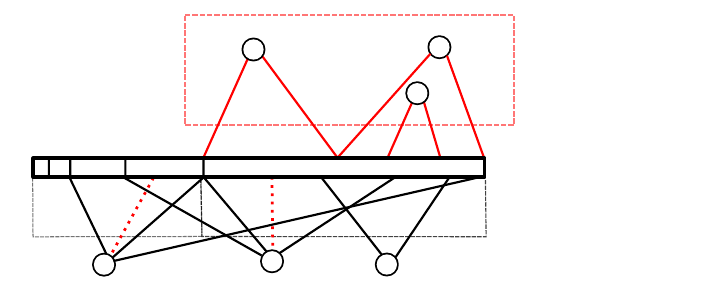}
  \caption{ In the figure, white nodes represent characters, while black and red blocks represent subsets of species.
  The red dashed edges depict the presence of a non-neighborhood of an \unrealized character.
  In a general connected \rbg, we can partition the set of species vertices into the ones incident on a red edge ($\VSR$), and the ones incident only on black edges ($\VSB$).
  Inactive characters can be partitioned according to their neighborhood in the set \VSR into the sets $\VCont, \VUni$ and $\VInt$.
    }
    \label{fig:general_Grb}
  \end{figure}

The next sequence of results aims to establish an ordering, in a reduction, between the sets of nodes in the $C$-partition.
For this purpose, we will focus on the sequence of partial reductions $\{\grbk\}_{k\in\{1,\ldots,m\}}$ of a maximal connected graph.

    
\begin{rmk} \label{rem:character_basics}

The following observations are a direct consequence of the definition of the partition of a \rbg.
    \begin{enumerate}
    \item \label{rem:VRadjVB} In a connected \rbg \grbk, all characters in $\VCR$ were adjacent, before their realization, to  all the  species in $\VSB$.
    \item \label{rem:red_perfect} In a \rbg \grbk, the set $\VCR$ together with its neighborhood $\VSR$ do not contain any \rs and therefore the induced subgraph is solved  by a (red) Perfect Phylogeny. This is a main consequence of the fact that it is a reducible subgraph consisting only of red edges, thus it cannot be solved by a phylogeny with persistent characters. Indeed, we apply Theorem \ref{thm:notredP5} to show that the subgraph has a reduction and admits a perfect phylogeny.
    \end{enumerate}
\end{rmk}


  
  
The following proposition provides a set of relations between sets in a $C$-partition.

\begin{prop}\label{prop:character_relations}
For any partial reduction $\grbk$ of a maximal connected graph, it holds:
  \begin{enumerate}
  \item \label{item:VB_in_VR} each \unrealized character $c\in \VCB$  has a neighbor in the (red) neighborhood of each \correction{of the} active character in \correction{\VCR;}
  \item \label{item:VC_or_VU} either $\VCont$ or $\VUni$ must be empty; 
  \item \label{item:VU_in_VI} each character in $\VUni$ has a non-neighbor in the neighborhood of each character in $\VInt$.
  \end{enumerate}
\end{prop}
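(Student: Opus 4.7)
The plan is to derive all three items from a single maximality-preservation observation: for every inactive character $c\in\VCB(\grbk)$ we have $\NN^k(c)=\NN^0(c)$. Indeed, black edges incident on $c$ are altered only when $c$ itself is realized, so while $c$ remains inactive each original neighbor $s\in\NN^0(c)$ still has the black edge $\{c,s\}$, cannot become isolated, and hence stays in $\grbk$. Combined with the maximality of $G_M$, this yields the central inequality that I would use throughout: for any two distinct inactive characters $c_1,c_2$ in $\grbk$, $\NN^k(c_1)\not\subseteq\NN^k(c_2)$ (since by maximality of $G_M$ there is a species $s\in\NN^0(c_1)\setminus\NN^0(c_2)$, and preservation places $s$ in $\NN^k(c_1)\setminus\NN^k(c_2)$).

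Items~2 and~3 then follow directly. For~(2), I would suppose for contradiction that $\VCont$ and $\VUni$ both contain characters $c_1$ and $c_2$; they are disjoint so $c_1\ne c_2$, and by definition $\NN^k(c_1)\subseteq\VSR\subseteq\NN^k(c_2)$, violating the central inequality. For~(3), given $c_1\in\VUni$ and $c_2\in\VInt$, the inequality applied to the distinct inactive pair $(c_1,c_2)$ yields $\NN^k(c_2)\not\subseteq\NN^k(c_1)$, which says exactly that $c_1$ has a non-neighbor in $\NN^k(c_2)$.

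For item~1, let $c'\in\VCR(\grbk)$ be realized at step $j+1\le k$. At step $j$ both $c$ and $c'$ are inactive, so by maximality of $G_M$ there is some $s\in\NN^0(c)\setminus\NN^0(c')$; the preservation observation applied at step $j$ gives $s\in\NN^j(c)$ and $s\notin\NN^j(c')$. By Proposition~\ref{prop:no_new_components} the graph $\grb^j$ is connected, so $s$ lies in $c'$'s connected component and thus receives a red edge to $c'$ during $c'$'s realization, putting $s\in\NN^{j+1}(c')$. The main obstacle is then showing $s\in\NN^k(c')$, which requires tracking $s$ through every intermediate step. The preservation observation ensures $s$ is never isolated (the black edge from the still-inactive $c$ keeps $s$ alive in every partial reduction), and individual red edges are deleted only when their incident character becomes red-universal and is itself isolated; since $c'$ is active in $\grbk$, it has not been isolated, so the red edge $\{c',s\}$ survives and $s\in\NN^k(c)\cap\NN^k(c')$, concluding the proof.
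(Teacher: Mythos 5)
Your proposal is correct and follows essentially the same route as the paper: all three items are reduced to the maximality of characters in $G_M$, transferred to the partial reduction via the observation that an inactive character's (black) neighborhood is unchanged. You are merely more explicit than the paper about why this preservation holds and, in item~1, about why the red edge $\{c',s\}$ created at $c'$'s realization survives to step $k$; the paper states the same facts more tersely and phrases item~1 as a contradiction rather than constructively.
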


\begin{proof}\phantom{a}
  \begin{enumerate}
  \item Let us first prove property \ref{item:VB_in_VR}. By contradiction, assume that an \unrealized character $c$   has no species in the (red) neighborhood of a character $c'\in \VCR$. 
  Therefore before the realization of $c'$ in the graph, $S(c)$ is included in $S(c')$, contradicting the maximality of character $c$.
  \item Let us prove \ref{item:VC_or_VU}. Assume that there exists $c_C \in  \VCont$ and  $c_U \in \VUni$, i.e. both sets are not empty; then $S(c_C) $ is contained in $S(c_U)$ by definition of $\VCont$ and $\VUni$, thus contradicting the maximality of $c_C$.
    \item Let us prove \ref{item:VU_in_VI}. If, by contradiction, we suppose that a character $c_U \in \VUni$ contains all the species of a character $c_I\in \VInt$, then $c_I$ would not be maximal. 
  \end{enumerate}
\end{proof}

The following proposition describes the order in the reduction  of sets in  the $C$- partitions.

\begin{prop}\label{prop:order_sets}
    Let $\pi$ be a reduction of a reducible connected maximal graph $G_M$, and let
    \grbk be a partial reduction of $\pi$.
    Then:  
  \begin{enumerate}
  \item\label{item:B1B2} if $\VInt\cup \VUni \neq \emptyset $, then all
        characters in $\VInt\cup \VUni$ must be realized in the reduction before any active character is \removedfrom the graph, otherwise
  \item\label{item:B0} if $\VInt\cup \VUni=\emptyset$ then all characters in $\VCont$ must be realized before any active character and any species in $\VSR$ is  \removed. 
  Moreover, the characters in $\VCont$ can be realized in any arbitrary order.
  \end{enumerate}  
\end{prop}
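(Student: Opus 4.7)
My plan is to handle the two items separately, each by contradiction. Both arguments rely on Proposition~\ref{prop:character_relations} (in particular item~\ref{item:VB_in_VR}) together with the following structural observation: every character in $\VInt \cup \VUni$ possesses a neighbor in $\VSB(\grbk)$---immediate for $\VUni$ by definition, and for $\VInt$ because otherwise $\NN(c)\subseteq \VSR$ would place $c$ in $\VCont$---whereas $\VCont$ characters have all their neighbors in $\VSR$.

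For item~\ref{item:B1B2}, I will assume for contradiction that some $c' \in \VCR(\grbk)$ is isolated at step $k+j$ (choosing $j$ minimal) while some $c \in \VInt \cup \VUni$ is still inactive. By the observation above, $c$ has a black neighbor $s_B \in \VSB(\grbk)$; by Proposition~\ref{prop:character_relations} item~\ref{item:VB_in_VR}, $c$ also has a black neighbor $s_1 \in \NN^k(c')$. The length-three path $s_B$--$c$--$s_1$--$c'$ persists through step $k+j$: the two black edges survive because $c$ is still inactive, and the red edge $(s_1,c')$ survives because $c'$ remains active until the very moment of its isolation. Thus $s_B$ lies in $c'$'s component at the instant $c'$ is declared red-universal. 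Since $s_B \in \VSB(\grbk)$ is not red-adjacent to $c'$ in $\grbk$, and $c'$'s red neighborhood is fixed after $c'$'s own realization, $s_B$ remains a witness that $c'$ is not red-universal, a contradiction.

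For item~\ref{item:B0}, I first note that $\VInt \cup \VUni = \emptyset$ forces $\VSB(\grbk) = \emptyset$: any species with only black edges would have to be black-adjacent to an inactive character, but every inactive character lies in $\VCont$ and has its neighborhood in $\VSR$, a contradiction. Hence every species is in $\VSR$. I then argue by induction on the realization step that while some $\VCont$ character remains inactive, the invariants $\VSB=\emptyset$ and $\VCB\subseteq\VCont$ are preserved, and the realization of any $\VCont$ character cannot isolate an element of $\VCR(\grbk)$ or $\VSR(\grbk)$. For a species $s \in \VSR(\grbk)$, its red edge to some $c^*\in \VCR(\grbk)$ survives by the inductive hypothesis, so $s$ keeps at least one edge. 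For $c' \in \VCR(\grbk)$, a witness $s^* \in \cNN^k(c')$ exists (since $c'$ was not red-universal in $\grbk$), and since the partial reduction remains connected by Proposition~\ref{prop:no_new_components}, $s^*$ cannot be expelled from $c'$'s component, so $c'$ stays non-red-universal. For the arbitrary-order claim I will invoke Lemma~\ref{lemma:swaporder}: each non-final $\VCont$ realization neither isolates species nor creates new components, hence any two consecutive $\VCont$ characters in $\pi$ can be swapped, and by composing swaps any permutation of $\VCont$ yields a valid reduction.

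The main obstacle in both items is the careful bookkeeping of the step-3 cascade of isolations within a single realization. In item~\ref{item:B1B2} I must certify that no intermediate isolation preceding $c'$'s own destroys the witness path $s_B$--$c$--$s_1$--$c'$; the key facts are that $c$, having only black edges, cannot be red-universal and therefore cannot be isolated by the cascade, and that $c'$'s incident edges remain in place right up to the moment $c'$ itself is selected. In item~\ref{item:B0}, the analogous delicate point is that the realization of a $\VCont$ character should not trigger a cascade isolating a character in $\VCR(\grbk)$; this hinges on the fact that with $\VSB=\emptyset$ every species retains a red edge to some still-active character, which prevents the graph from shedding the witnesses that keep $\VCR(\grbk)$ characters from becoming red-universal.
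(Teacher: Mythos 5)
Your proof is correct and takes essentially the same route as the paper's: item~\ref{item:B1B2} rests on the observation that a species of $\VSB$ black-adjacent to a still-inactive character of $\VInt\cup\VUni$ remains a non-red-neighbor in the (single) connected component and thus blocks red-universality of every active character, and item~\ref{item:B0} combines $\VSB=\emptyset$, the persistence of witnesses against red-universality in $\VCR$, and Lemma~\ref{lemma:swaporder} for the arbitrary-order claim. Your version merely makes explicit the cascade bookkeeping that the paper leaves implicit, and replaces the paper's ``at least two red connected components'' observation with per-character witnesses $s^*\in\cNN^k(c')$, which is an equivalent formulation.
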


\begin{proof}
    Recall that by Proposition~\ref{prop:no_new_components},  all red-black graphs generated along the realization of the characters in a reduction consist of a single connected component.  
    Additionally, we have that an active  character  can be \removed only if it becomes (red) universal.

    If  $\VInt\cup \VUni\neq \emptyset$, then active characters  can become (red) universal only if $\VSB$ is empty.
    Therefore, species in \VSB must be realized before any active character becomes \removed, i.e.  characters in $\VInt\cup \VUni$ must be realized, proving the first statement.
    
  Let us now prove statement \ref{item:B0}.
    If $\VInt\cup \VUni=\emptyset$  then it must be that  $\VSB=\emptyset$.
    In this case, the set of species in the graph is $\VSR$, while all the remaining \unrealized characters are in $\VCont$.
    We have that $\VCR\cup\VSR$ contains more than one (red) connected component; otherwise there exists a red universal character which has not been \removed, leading to a contradiction.
    Moreover, each character in \VCont has a neighbor in each of the (red) neighborhoods of \VCR (Proposition~\ref{prop:character_relations}).
    We conclude that, until the realization of all characters in \VCont, the realization of a character $c_C\in\VCont$ can not create a new connected component and can not \remove any character or species. Thus, their realization can be done in any order according to Lemma~\ref{lemma:swaporder}.

      
  \end{proof}

\begin{rmk}\label{rem_VCont_safe}
    If $\VInt\cup \VUni=\emptyset$ then, by \correction{Proposition}~\ref{prop:order_sets}, all characters in $\VCont$ are safe.
\end{rmk}


\subsection{Constructing a reduction within sets $\VInt$ and $\VUni$}\label{sec:reduction_VInt_VUni}

 While the previous results give the order of realization of inactive characters in $\VCont$ when $\VInt \cup \VUni = \emptyset$, the following propositions establish the ordering, in a reduction,  of characters within the sets $\VInt$ and $\VUni$.

Proposition \ref{prop:NN_VB1} states that a character $c_k$ in the set \VInt must be, just before its realization at $k$, universal in the set  $\VSB(\grb^{k-1})$, that is a character is realized when $\NN^k(c)\cap \VSB(\grb^{k-1}) = \VSB(\grb^{k-1})$. Clearly, once it is realized, some species in \VSB is \removed, and then the next character to be realized is the one that is still universal in the new set $\VSB$.
 As a consequence of this result, we know that there exists an ordering of characters such that the \correction{neighborhoods} of the characters are in inclusion relationship w.r.t. to the set $\VSB$ of species with only incident black edges.

\begin{prop}\label{prop:NN_VB1}
    Let \grbk be the $k$-th partial reduction of a graph $G_M$ and assume that the set \VInt\!(\grbk) contains at least two characters. 
    Then there exists an ordering $\pi_I(\grbk)=\langle c_{I_1},\ldots,c_{I_{|\VInt|}}\rangle$ of the characters in \VInt, such that for all   $1 \le j < |\VInt|$, 
    $\NN^{k}(c_{I_{j+1}})\cap \VSB(\grbk) \subseteq \NN^k(c_{I_{j}})\cap \VSB(\grbk)$.
\end{prop}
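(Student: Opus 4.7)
The statement is equivalent to showing that the family $\{\NN^k(c)\cap\VSB(\grbk):c\in\VInt(\grbk)\}$ is totally ordered by inclusion, for then listing the characters in decreasing $\VSB$-neighborhood size yields $\pi_I$. The plan is to establish this chain property by contradiction: assume two characters $c, c' \in \VInt(\grbk)$ have incomparable $\VSB$-neighborhoods, witnessed by $s_1 \in \NN^k(c)\cap\VSB(\grbk)\setminus\NN^k(c')$ and $s_2 \in \NN^k(c')\cap\VSB(\grbk)\setminus\NN^k(c)$, and let $\pi$ be a reduction of $G_M$ of which $\grbk$ is a partial reduction. Without loss of generality, suppose $c$ is the first of $\{c, c'\}$ realized along $\pi$, at step $\tau(c) > k$. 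Since $c \in \VInt(\grbk)$, fix a non-neighbor $r_c \in \VSR(\grbk)\setminus\NN^k(c)$ and choose an active character $c^* \in \VCR(\grbk)$ incident to $r_c$ by a red edge, which exists by the very definition of $\VSR(\grbk)$.

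The heart of the argument will be an application of Proposition~\ref{prop:universalTopBottom} to the realization of $c$ at step $\tau(c)$, with $c^*$ as the active reference character. Several persistence conditions must first be verified: by Proposition~\ref{prop:order_sets}(1), no active character of $\grbk$ is isolated before $\VInt(\grbk)\cup\VUni(\grbk)$ has been entirely realized, so $c^*$ is still in $G^{\tau(c)-1}_{RB}$, together with the red edge $\{c^*, r_c\}$ and hence $r_c$, giving $r_c \in \NN^{\tau(c)-1}(c^*)$; the species $s_2$ is still in $G^{\tau(c)-1}_{RB}$ via its black edge to the still-inactive $c'$; by Proposition~\ref{prop:no_new_components}, $G^{\tau(c)-1}_{RB}$ has a single connected component, so $c, c^*, r_c, s_2$ all lie in the component considered by Proposition~\ref{prop:universalTopBottom}; and by Proposition~\ref{prop:character_relations}(1), $c$ has a neighbor in $\NN^{\tau(c)-1}(c^*)$, so the hypothesis of Proposition~\ref{prop:universalTopBottom} is met. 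Since $c$ is still inactive, $\NN^{\tau(c)-1}(c) = \NN^k(c)$. Both options allowed by Proposition~\ref{prop:universalTopBottom} are then contradicted: Option~1, $\NN^{\tau(c)-1}(c) \supseteq \NN^{\tau(c)-1}(c^*)$, fails because $r_c \in \NN^{\tau(c)-1}(c^*)$ but $r_c \notin \NN^k(c)$; Option~2, $\NN^{\tau(c)-1}(c) \supseteq \cNN^{\tau(c)-1}(c^*)$, fails because $s_2 \in \VSB(\grbk)$ is incident only on black edges in $\grbk$ and is therefore never adjacent to the active character $c^*$, placing $s_2$ in $\cNN^{\tau(c)-1}(c^*)\setminus\NN^k(c)$.

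The simultaneous failure of both options contradicts the assumption that $\pi$ is a reduction, so no incomparable pair can exist in $\VInt(\grbk)$; the $\VSB$-neighborhoods form a chain, and $\pi_I$ is obtained by listing the characters of $\VInt(\grbk)$ in decreasing order of $|\NN^k(c)\cap\VSB(\grbk)|$. The main technical hurdle in this plan is the persistence bookkeeping, namely certifying that $c^*$, $r_c$, and $s_2$ are all still present in $G^{\tau(c)-1}_{RB}$ when Proposition~\ref{prop:universalTopBottom} is invoked; this hinges on combining the deferred-isolation guarantee from Proposition~\ref{prop:order_sets} with the connectedness guarantee from Proposition~\ref{prop:no_new_components}, and also on the fact that active characters keep their red edges intact up to the moment of their isolation.
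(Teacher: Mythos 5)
Your proof is correct and follows essentially the same route as the paper's: reduce to pairwise comparability, take the first-realized character of an incomparable pair, and contradict Proposition~\ref{prop:universalTopBottom} using an active reference character whose red neighborhood contains a non-neighbor of that character (from the definition of \VInt) while its complement contains the $\VSB$-witness. Your version is somewhat more careful than the paper's about the persistence bookkeeping (that $c^*$, $r_c$, and $s_2$ survive until the realization step), which the paper leaves implicit.
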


\begin{proof}
    The proof is based on the fact that the realization of characters in \VInt can not generate red edges with species in $\VSB$, otherwise a \rs is created. 
    Therefore, at the time of their realization characters in \VInt must be universal in \VSB, creating an order of containment between them.

    Formally, let $c_1$ and $c_2$ \correction{be} two characters in \VInt\!(\grbk), we will show that their neighborhood, restricted to $\VSB$, are in inclusion relation, from which the desired result follows.

     W.l.o.g., \correction{assume} that $c_1$ is realized before  $c_2$ in a reduction.
     If  the neighborhood of $c_2$ in $\VSB$ is not included in the neighborhood of $c_1$  in $\VSB$,  i.e.  they are not in inclusion relation, it means  that there is a species $s_2$ in $\left( \NN^k(c_2)\cap \VSB\right) \smallsetminus \NN^k(c_1) $ (see Figure~\ref{fig:pi_1_2} left).
    Moreover, by Proposition~\ref{prop:order_sets}, we have that in any reduction, characters $c_1$ and $c_2$ must be realized before \removing any of the characters in $\VCR$.
   
    The definition of the set \VInt ensures the existence of a character $c\in \VCR$ such that $c_1$ has a non-neighbor in $\NN^k(c)$.
    Additionally, we know the existence of a neighbor of $c_1$ in $\NN^k(c)$, otherwise $c_1$ can not be maximal (Proposition~\ref{prop:character_relations}.\ref{item:VB_in_VR}).  
    We conclude that the realization of $c_1$  creates a \rs since it is neither universal in $\NN^k(c)$ nor in $\bar\NN^k(c)$ (Proposition~\ref{prop:universalTopBottom}), leading to a contradiction. 
    Thus, in the set \VSB, the neighborhood of $c_1$ must contain the one of $c_2$.
    \end{proof}

\begin{figure}[htb]  
\centering
      \includegraphics[width=.45\linewidth]{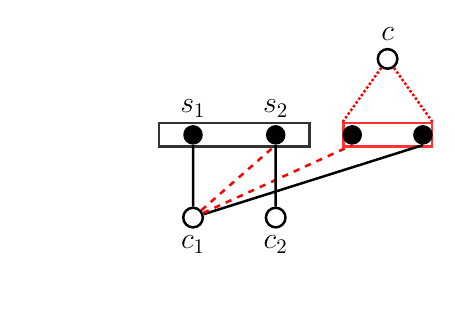}
  \hspace{1em}
        \includegraphics[width=.45\linewidth]{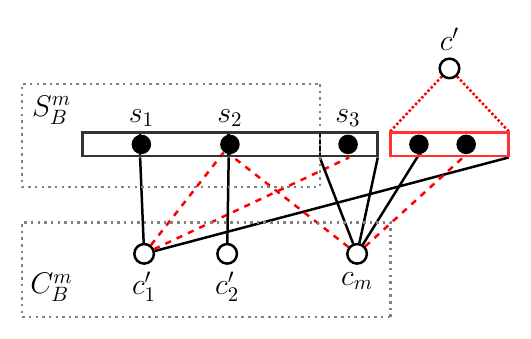}
  \caption{Left: Proof of Proposition~\ref{prop:NN_VB1}. 
  Characters in the set \VInt must be, at their realization, universal in the set $\VSB$. This implies the existence of an inclusion ordering between their neighborhoods.
  Right:  Proof of Proposition~\ref{prop:NN_VB2}. 
  The neighborhood containment relation can be extended to the set $\VInt\cup\VUni$ when restricted to the set $\VSBm$.}
  \label{fig:pi_1_2}
  \end{figure}


  Let $c_m$ denote the last element according to the ordering $\pi_I$ defined in Proposition~\ref{prop:NN_VB1}, and let $\VSBm$ denote the complement, in the set $\VSB$, of the neighborhood of $c_m$, that is, $\VSBm= \VSB\cap \cNN^k(c_m)$ \correction{(see Figure~\ref{fig:pi_1_2} on the right)}.
  Furthermore, let $\VCBm$ denote the union of the characters in \VInt with the universal characters in $\VUni$ with at least \correction{one} neighbor in the set \VSBm (see Figure~\ref{fig:pi_1_2} \correction{right}).
  The following proposition proves that the containment order $\pi_I$ can be extended to include the characters in the set \VCBm.
  
\begin{prop}\label{prop:NN_VB2}
    Let \grbk be the $k$-th partial reduction of a graph $G_M$ such that the set $\VInt$ is not empty.
    Then there exists an ordering  $\pi_U(\grbk)= \langle c'_{B_1},c'_{B_2},\ldots,c'_{B_{|\VCBm|}} \rangle$ of   characters in $\VCBm$  such that
    $
    \NN^k(c'_{B_{j+1}})\cap \VSBm \subseteq \NN^k(c'_{B_{j}})\cap \VSBm, \text{ for all }\: 1\le j< |\VCBm|. 
    $ 
\end{prop}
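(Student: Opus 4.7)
The plan is to prove the chain property stated in the proposition by showing that, for any two characters $c'_1,c'_2\in\VCBm$, the sets $\NN^k(c'_1)\cap\VSBm$ and $\NN^k(c'_2)\cap\VSBm$ are comparable under inclusion; the ordering $\pi_U$ then follows by sorting the characters in decreasing order of $|\NN^k(\cdot)\cap\VSBm|$. The strategy will mirror that of Proposition~\ref{prop:NN_VB1}, with two substantial differences: the role of $\VSB$ is replaced by $\VSBm$, and when $c'_1\in\VUni$ the forbidden \rs is not produced immediately but only after $c'_1$ becomes active and serves itself as the obstructing active character at a later realization.

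I will argue by contradiction. Suppose $c'_1,c'_2\in\VCBm$ have incomparable $\VSBm$-neighborhoods, witnessed by $s_1\in(\NN^k(c'_1)\cap\VSBm)\setminus\NN^k(c'_2)$ and $s_2\in(\NN^k(c'_2)\cap\VSBm)\setminus\NN^k(c'_1)$. Fix a reduction $\pi$ of $G_M$ (which exists by reducibility of $G_M$) and assume without loss of generality that $c'_1$ is realized at step $j_1$ before $c'_2$ at step $j_2$. The analysis then splits based on whether $c'_1\in\VInt$ or $c'_1\in\VUni$.

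In the case $c'_1\in\VInt$, the argument of Proposition~\ref{prop:NN_VB1} transfers essentially unchanged: $s_2\in\VSB\subseteq\cNN^k(c)$ for an active character $c\in\VCR$ witnessing $c'_1\in\VInt$, and by maximality (Proposition~\ref{prop:character_relations}) $c'_1$ has a neighbor in $\NN^k(c)$; Proposition~\ref{prop:universalTopBottom} then forces the realization of $c'_1$ to produce a \rs, contradicting $\pi$. In the case $c'_1\in\VUni$, universality of $c'_1$ in $\VSR$ prevents any direct obstruction via characters of $\VCR$; instead I will use the fact that after realization, $c'_1$ becomes active with red-neighborhood containing $s_2$ (surviving by Proposition~\ref{prop:order_sets}). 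At step $j_2$, the realization of $c'_2$ must satisfy Proposition~\ref{prop:universalTopBottom} with $c=c'_1$: the witness $s_1$ lies in $\cNN^{j_2}(c'_1)\setminus\NN^{j_2}(c'_2)$ (since its black edge to $c'_1$ was removed during step~2 of realization with no red replacement), ruling out $\NN^{j_2}(c'_2)\supseteq\cNN^{j_2}(c'_1)$. The remaining possibility $\NN^{j_2}(c'_2)\supseteq\NN^{j_2}(c'_1)$ is ruled out by exhibiting a species in $\NN^{j_2}(c'_1)\setminus\NN^{j_2}(c'_2)$, which I will obtain from the presence of $c_m$ (still unrealized at step $j_2$, since $c_m$'s realization requires all of $\VSBm$ to have left $\VSB$, and in particular $s_1$) together with the maximality of the characters involved.

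The main obstacle is the bookkeeping of how intermediate realizations between steps $j_1$ and $j_2$ may relocate the witnesses $s_1,s_2$ between $\VSB$ and $\VSR$, isolate species, or evolve the red-neighborhood of $c'_1$. I plan to mitigate this by applying Lemma~\ref{lemma:swaporder} to contract the reduction so that $c'_2$ immediately follows $c'_1$ whenever this is valid, and by repeatedly invoking Proposition~\ref{prop:order_sets} to ensure that $c_m$ and every active character in $\VCR$ remain present throughout the interval $[j_1,j_2]$, so that the arguments above apply unhindered.
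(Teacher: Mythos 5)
Your overall architecture (pairwise comparability of the $\VSBm$-restricted neighborhoods, a w.l.o.g.\ assumption on realization order, and a case split on $c'_1\in\VInt$ versus $c'_1\in\VUni$) matches the paper, and your $\VInt$ case is exactly the paper's. The gap is in the $\VUni$ case, specifically in \emph{where} you try to force the contradiction. You apply Proposition~\ref{prop:universalTopBottom} at step $j_2$ with $c=c'_1$ as the obstructing active character, which requires exhibiting a species in $\NN^{j_2}(c'_1)\setminus\NN^{j_2}(c'_2)$. But $\NN^{j_2}(c'_1)$ is the \emph{red} neighborhood of $c'_1$, that is, the surviving part of $\cNN^{k}(c'_1)$, so you need a surviving \emph{common non-neighbor} of $c'_1$ and $c'_2$ in \grbk. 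No such species need exist: if $\cNN^{k}(c'_1)\subseteq\NN^{k}(c'_2)$ --- which is perfectly consistent with your witnesses, since $s_1\in\NN^k(c'_1)$ and $s_2\in\NN^k(c'_2)$ --- then $\NN^{j_2}(c'_2)\supseteq\NN^{j_2}(c'_1)$, Proposition~\ref{prop:universalTopBottom} is satisfied in its first alternative at step $j_2$, and no \rs arises there. Maximality only yields species in the symmetric difference of the black neighborhoods, never a common non-neighbor, so the ingredient you defer to ``the presence of $c_m$ together with maximality'' cannot be manufactured in general.

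The paper sidesteps this by postponing the contradiction to the realization of $c_m$ rather than of $c'_2$, and it needs only the one-sided witness $s_2\in\VSBm\cap\NN^k(c'_2)\setminus\NN^k(c'_1)$ (one-sided non-inclusion for the earlier-realized character suffices, which also yields the stronger fact that realization order agrees with inclusion order). Concretely: realizing $c'_1\in\VUni$ creates the red edges $(c'_1,s_2)$ and $(c'_1,s_3)$ with $s_3\in\NN^k(c_m)$ given by Proposition~\ref{prop:character_relations}.\ref{item:VU_in_VI}, so $c'_1$ cannot become red-universal before $c_m$ is realized and $s_2$ therefore survives; since $c_m\in\VInt$ has both a neighbor and a non-neighbor in $\NN^k(c)$ for some $c\in\VCR$, while $s_2\in\cNN^k(c)\setminus\NN^k(c_m)$ is still present, Proposition~\ref{prop:universalTopBottom} fails for $c_m$. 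Redirecting your $\VUni$ case to this target closes the gap; it also removes the need for your use of Lemma~\ref{lemma:swaporder} to make $c'_2$ immediately follow $c'_1$, whose hypotheses (no species \removed, no new component) you cannot always arrange for the intermediate steps.
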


\begin{proof}
  Similarly to Proposition~\ref{prop:NN_VB1}, we will prove that for every pair of characters $c_1',c_2' \in \VCBm$, their neighborhoods in  $\VSBm$ are in inclusion relation.
  As in Proposition~\ref{prop:NN_VB1} and  w.l.o.g. we can assume that in a reduction, $c'_1$ is realized before $c'_2$.
  
    Moreover, we can assume that both $c'_1$ and $c'_2$ are not the minimum elements of $\pi_I$, otherwise $\NN^k(c'_{i})\cap \VSBm=\emptyset$ and the result trivially holds.  
  Therefore, in any reduction $c_1'$ and $c'_2$ must be realized before $c_m$, otherwise the realization of $c_m$ would generate a \rs together with a character in $\VCR$ according to Proposition~\ref{prop:universalTopBottom}.

    The proof is similar to the one of Proposition~\ref{prop:NN_VB1} and is based on the fact that the realization of $c_1'$ and $c_2'$ can not generate red edges in the set $\VSBm$. Otherwise, they will create a \rs with $c_m$.

    If by contradiction \correction{the neighborhood of} $c'_1$ in $\VSBm$ does not include the neighborhood of $c'_2$, then there exists a species $s_2\in \VSBm$ such that $s_2\notin \NN^k(c'_1)$ (see Figure~\ref{fig:pi_1_2} right) and $s_2$ is in the neighborhood  of $c'_2$.
    
    
    If $c'_1\in \VInt$, we know that at the time of its realization $c'_1$ must be universal in $\VSB$, and the results holds. 
    Thus, we assume that $c'_1\in \VUni$. 
    By Proposition~\ref{prop:character_relations}.\ref{item:VU_in_VI}, we know that $c'_1$ has a non-neighbor, denoted by $s_3$, belonging to $\NN^k(c_m)$.
    Hence, the realization of $c'_1$ creates the red edges $(c'_1\,s_2)$ and $(c'_1\,s_3)$, that is $c'_1 $ is adjacent to a species of $c_m$ and a species outside the  neighborhood of $c_m$.
    Moreover, after its realization, the character $c'_1$ cannot be become red-universal before the realization of $c_m$.
    
    Since $c_m\in \VInt$ and from Proposition~\ref{prop:character_relations}.\ref{item:VB_in_VR}, there exists an active character $c\in \VCR$ such that $c_m$ has both a neighbor and a non-neighbor $s'$ in $\NN^k(c)$. As a consequence, the realization of $c_m$ produces the red edges $(c_m,s_2)$ as $s_2$ is in \VSBm and the edge $(c_m, s')$ for $s'$ in $\NN^k(c)$.
    This fact   makes  the realization of the character $c_m$ impossible, \correction{ since it will become universal on neither} $\NN^k(c)$ nor $\cNN^k(c)$, leading to a contradiction with Proposition~\ref{prop:universalTopBottom}.
\end{proof}

In other words, characters in $\VCBm$ can be ordered according to the inclusion relation of their neighborhood in the set $\VSBm$. We denote by $\pi_U$ the ordering induced by this containment relation between the elements of \correction{$\VCBm$}. 
The following theorem summarizes the previous results of the section and establishes the procedure to compute a safe character in a \rbg when $\VCR\neq\emptyset$.


\begin{theorem}\label{thm:safe_character}
  Let \grbk be a connected \rbg obtained after the realization of the first $k$ characters in a reduction of a maximal \rbg such that the set $\VCR$ of active characters of $\grbk$ is not empty, that is $\VCR\neq \emptyset$. 

  \begin{enumerate}
  \item\label{item:VCont} If $\VInt=\emptyset$ and $\VUni=\emptyset$ then all characters in $\VCont$ are safe.
  \item\label{item:recursive_VUni} If \correction{$\VInt=\emptyset$} and $\VUni\neq\emptyset$ then a character $c$ is safe if and only if $c$ is safe in the subgraph induced by $\VSB\cup \VUni$. 
 \item\label{item:recursive_VInt} If $\VInt\neq \emptyset$, then every maximal character of the ordering $\pi_U$ is safe.
  \end{enumerate}
\end{theorem}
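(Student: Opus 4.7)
My plan is to handle the three cases separately, each building on structural results already established in the paper.

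For Part~\ref{item:VCont}, I would simply appeal to Remark~\ref{rem_VCont_safe}, which follows directly from Proposition~\ref{prop:order_sets}(\ref{item:B0}): when $\VInt\cup\VUni=\emptyset$, the characters of $\VCont$ may be realized in any order without violating the reduction, so each of them is safe.

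For Part~\ref{item:recursive_VUni}, I would first invoke Proposition~\ref{prop:character_relations}(\ref{item:VC_or_VU}) to conclude that $\VCont=\emptyset$, so every inactive character lies in $\VUni$ and is universal in $\VSR$. The crucial observation is that realizing any $c\in\VUni$ adds red edges only between $c$ and species in $\VSB\setminus\NN^k(c)$: no new red edge touches $\VSR$, and the induced subgraph on $\VCR\cup\VSR$ is left intact, hence remains a red perfect phylogeny by Remark~\ref{rem:character_basics}(\ref{rem:red_perfect}). A short case analysis on the possible endpoints of a newly created \rs then shows that every such \rs lies entirely inside the induced subgraph on $\VSB\cup\VUni$, giving the desired equivalence. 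For the ``if'' direction I would lift a reduction of the induced subgraph starting with $c$ to the full graph and append a reduction of the residual red perfect phylogeny on $\VCR\cup\VSR$, which exists by Theorem~\ref{thm:notredP5}; the ``only if'' direction follows by restricting a full-graph reduction beginning with $c$ to its $\VUni$ coordinates and checking it is a valid reduction of the induced subgraph.

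Part~\ref{item:recursive_VInt} is the main technical case. Let $c^*$ be a maximal character of $\pi_U$. I plan to start from any reduction $\pi$ of $\grbk$ (which exists by reducibility) and reorder it so that $c^*$ comes first. Proposition~\ref{prop:order_sets}(\ref{item:B1B2}) forces every $\VInt\cup\VUni$ character to precede the isolation of any active character in $\pi$, while Propositions~\ref{prop:NN_VB1} and~\ref{prop:NN_VB2} force the $\VCBm$-characters to appear in $\pi$ in the order $\pi_U$, so $c^*$ is the first $\VCBm$-character in $\pi$. Any characters occurring before $c^*$ therefore belong to $\VUni\setminus\VCBm$, have no neighbor in $\VSBm$, and can be pushed past $c^*$ via the swapping Lemma~\ref{lemma:swaporder}, because their realization only creates red edges incident to species outside $\VSBm$ and does not alter the connected components. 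In parallel, I would verify via Proposition~\ref{prop:universalTopBottom} that realizing $c^*$ creates no \rs, using maximality of $c^*$ in $\pi_U$ together with the ordering $\pi_I$ and Proposition~\ref{prop:character_relations} to guarantee that $\NN^k(c^*)$ contains $\NN^k(c)$ or $\cNN^k(c)$ for every active character $c$ in its component.

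The main obstacle is the swapping step in Part~\ref{item:recursive_VInt}: checking that each $\VUni\setminus\VCBm$ character preceding $c^*$ in $\pi$ satisfies the hypothesis of Lemma~\ref{lemma:swaporder} (no new components, no premature isolation of a species or character needed later) requires a careful use of the partition structure, and in particular of the maximality of $c^*$ in $\VSBm$ combined with $\VInt\neq\emptyset$, to rule out degenerate configurations in which an early realization would isolate a species of $\VSBm$.
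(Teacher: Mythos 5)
Your proposal follows essentially the same route as the paper's own proof: Part~1 via Remark~\ref{rem_VCont_safe} and Proposition~\ref{prop:order_sets}, Part~2 via Proposition~\ref{prop:character_relations} and the observation that new red edges (and hence any new \rs) stay inside $\VSB\cup\VUni$, and Part~3 via the universality of the maximal element of $\pi_U$ on $\VSBm$ (or $\VSB$) so that nothing is \removed before its realization and Lemma~\ref{lemma:swaporder} lets you move it to the front. Your write-up is in fact more explicit than the paper's rather terse argument, particularly in spelling out the swapping step and the two directions of the equivalence in Part~2.
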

\begin{proof}\phantom{a}
  \begin{enumerate}
  \item  If  $(\VInt=\emptyset \wedge \VUni=\emptyset)$ then $\VSB=\emptyset$. Moreover, as stated in Remark~\ref{rem_VCont_safe} and by Proposition~\ref{prop:order_sets}.\ref{item:B0}, species in $\VSR$ can be  \removed only after the realization of all characters in $\VCont$. \correction{Consequently, all characters in \VCont are safe and} their order of realization is arbitrary.
  \item \correction{If $(\VInt=\emptyset \wedge \VUni\neq\emptyset)$,} by Proposition \ref{prop:character_relations}.\ref{item:VC_or_VU}, $\VCont$ must be empty. 
  Since by hypothesis $\VInt=\emptyset$, then all \unrealized characters are in \VUni, which by their definition are universal in $\VSR$. 
  We conclude that any potential \rs induced by the realization of the remaining \unrealized characters must be in the subgraph induced by $\VSB\cup \VUni$ \correction{(see Figure~\ref{fig:Cu_restart})}.
  
  \item 
\correction{$(\VInt\neq\emptyset)$.} If $\VSBm\neq\emptyset$ then a maximal element of $\pi_U$ is universal on $\VSBm$, therefore no species can be \removed before its realization, hence it is safe.
On the other hand, if $\VSBm=\emptyset$,  a maximal  characters of $\pi_U$ is universal in $\VSB$. Thus, no species can be \removed before their realization, hence they are safe.  
    \end{enumerate}
\end{proof}


Points \ref{item:VCont} and \ref{item:recursive_VInt} of Theorem~\ref{thm:safe_character} provide a sequence of safe characters, thereby an extension of the reduction of \grbk. 
  On the other hand, point \ref{item:recursive_VUni} describes a scenario where active characters impose no constraints on selecting the next safe character, therefore it is essentially equivalent to a \rbg with no active characters.
Although this scenario could potentially lead to an exponential time complexity, we prove in the next section that such a case cannot occur.

\begin{figure}
\centering
      \includegraphics[width=.95\linewidth]{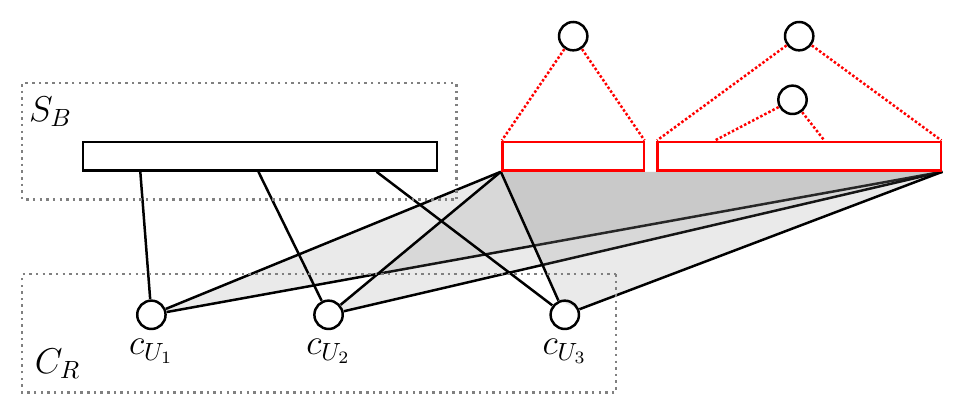}
    \caption{A \rbg where $\VInt=\emptyset$ and $\VUni\neq\emptyset$. In this case $\VCont=\emptyset$, and therefore the \rbg is reducible if and only if the black graph induced by $\VSB\cup\VUni$ is reducible. }
    \label{fig:Cu_restart}
\end{figure}

\subsection{Initial character of a reduction}
\label{sec:initial}

 In this section, we aim to characterize the sequence of characters that start a reduction. 
 The following results guarantee that a reduction of a maximal graph can start with the realization of a species of minimum degree, that is the number of species that have such a character is minimum.

In the following, we will prove a technical Lemma showing that a reduction can be assumed to start by \removing a species. We then prove that this initial species has minimum degree.

Finally, we prove that all minimum degree species other than the initial species one can be \removed only at the end of the reduction. 

 \begin{lem} \label{lemma:first-species}
    For any maximal connected reducible graph $G_M$, there exists a reduction $\pi$ and a species $s_0\in G_M$, such that the first characters in $\pi$ are the characters of species $s_0$.
  \end{lem}

 \begin{proof}
    By Proposition~\ref{prop:no_new_components}, we know that any reduction of $G_M$ does not generate a new connected component until all characters are realized.
    Let $\pi$ be a reduction of a maximal graph $G_M$, and 
    let $k$ denote the first time in the sequence of realizations according to $\pi$ that in the \rbg \grbk a species is \removed.
    
    If $k=m$ then after the realization of all characters, no changes were made in the set of nodes in the former connected component. According to Lemma~\ref{lemma:swaporder}, it is possible to rearrange the order of the realizations to \remove any species in the graph.  
    
    On the other hand, if $k<m$ we have that in the realization of the $k$-th character in the reduction, either a character was \removed or a species was \removed. 
    By definition, a realized character can be \removed only if it becomes universal, and therefore after \removing all its neighbor species, which would contradict the definition of $k$. We conclude that the reduction must begin by \removing a species as required.%

  \end{proof}
  
  \begin{prop}
     \label{prop:first-minimal-species}
   If a maximal graph $G_M$ is reducible, then there exists a reduction starting with all the characters, in arbitrary order, of a minimum degree species node.
  \end{prop}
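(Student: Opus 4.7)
The plan is to start from Lemma~\ref{lemma:first-species}, which provides a reduction $\pi$ whose first $k_0 := |C(s_0)|$ characters are precisely those of some species $s_0$. The strategy will be: if $s_0$ is not of minimum degree, reorder $\pi$ into a new reduction whose initial species has strictly smaller degree, and iterate. I also rely on the separate fact (established below) that the first $k_0$ characters of any such $\pi$ may be freely permuted.

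The free-permutation claim will come from iterating Lemma~\ref{lemma:swaporder}. For every $k \in \{1, \ldots, k_0 - 1\}$ and every arrangement of $C(s_0)$ in the first $k_0$ positions, the realization at step $k$ removes neither a species nor a character: $s_0$ still has a black edge remaining, every other species $s$ with $C(s) \not\subseteq C(s_0)$ retains black edges outside $C(s_0)$, and every $s$ with $C(s) \subsetneq C(s_0)$ acquires red edges from $C(s_0) \setminus C(s) \ne \emptyset$; moreover each realized $c \in C(s_0)$ has $s_0 \in N(c)$, hence $s_0 \notin \cNN(c)$, so $c$ cannot become red-universal while $s_0$ is still in the component. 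Adjacent swaps are therefore always legal within the first $k_0$ positions.

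The key structural step uses Proposition~\ref{prop:universalTopBottom} together with maximality. For any distinct $c, c' \in C(s_0)$, realize $c$ first and $c'$ second (legal by the permutation step). The hypothesis of Proposition~\ref{prop:universalTopBottom} holds, since maximality forces $N(c') \not\subseteq N(c)$ and hence $N(c') \cap (S \setminus N(c)) \ne \emptyset$. The conclusion is that $N(c')$ contains either $\NN^1(c) = S \setminus N(c)$ or $\cNN^1(c) = N(c)$; the second option is again ruled out by maximality. Therefore $N(c) \cup N(c') = S$ for every pair in $C(s_0)$, meaning every species $s$ satisfies $|C(s_0) \setminus C(s)| \le 1$ and hence $|C(s)| \ge k_0 - 1$. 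If the minimum degree equals $k_0$ we are done; otherwise there is $s^*$ with $|C(s^*)| = k_0 - 1$, and the inequality $|C(s_0) \setminus C(s^*)| \le 1$ combined with $|C(s^*)| < |C(s_0)|$ forces $C(s^*) \subsetneq C(s_0)$ with a unique $c_0 \in C(s_0) \setminus C(s^*)$.

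The final move will be to reorder the first $k_0$ characters of $\pi$ so that $C(s^*)$ comes first (in any order) and $c_0$ sits at position $k_0$; this remains a reduction by the permutation step. After the first $k_0 - 1$ realizations, $s^*$ has lost all its black edges and gained no red edges (since $s^* \in N(c)$ for every $c \in C(s^*)$), so it is isolated; a parallel check shows no other species is isolated earlier. Thus the reordered reduction starts with characters of $s^*$, which has strictly smaller degree than $s_0$. Since degrees are nonnegative integers, iterating this degree-reduction step terminates at a reduction starting with a minimum-degree species, and arbitrariness of the internal order is inherited from the first step. The main obstacle is the structural step: correctly instantiating Proposition~\ref{prop:universalTopBottom} and verifying its hypothesis in each direction; once that is in place, the degree-reduction is a short inductive descent.
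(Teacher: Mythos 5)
Your proposal is correct in substance and follows essentially the same route as the paper: Lemma~\ref{lemma:first-species} supplies a reduction opening with the characters of some species $s_0$; a forbidden-structure argument on pairs $c,c'\in C(s_0)$ shows that $N(c)\cup N(c')$ must be the whole species set --- the paper exhibits the resulting \rs directly on two characters of $C(s_0)\setminus C(s')$ for a putative smaller-degree species $s'$ with $C(s')\not\subseteq C(s_0)$, while you route the very same construction through Proposition~\ref{prop:universalTopBottom} plus maximality; and the remaining case of a smaller-degree species included in $C(s_0)$ is dispatched by reordering the prefix, exactly as in the paper. Two remarks. First, your blanket claim that adjacent swaps are ``always legal within the first $k_0$ positions'' is not justified as written and is in tension with your own final move: a species $s$ with $C(s)\subsetneq C(s_0)$ acquires red edges only after some character of $C(s_0)\setminus C(s)$ has been realized, so an arrangement whose realized prefix is exactly $C(s)$ isolates $s$ mid-prefix, and the hypothesis of Lemma~\ref{lemma:swaporder} then fails for the next swap; this is precisely the configuration you engineer at the end to isolate $s^*$, so the free-permutation step needs to be restricted to the specific reorderings you actually perform (and checked for them) rather than asserted for all arrangements. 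Second, no iterated descent is needed: your own structural step gives that every species misses at most one character of $C(s_0)$, hence the minimum degree is at least $k_0-1$, so the species $s^*$ of degree $k_0-1$ that you produce already has minimum degree and a single reordering finishes the proof.
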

  \begin{proof}
    According to Lemma~\ref{lemma:first-species}, we can assume that there exists a reduction that starts with the realization of a species $s_0$.
    By contradiction, let \correction{us} suppose that $s_0$ cannot be of minimum degree.
    Therefore, there exists a species $s'$ such that $|\NN(s')|<|\NN(s_0)|$.
    If $\NN(s')\subset \NN(s_0)$\correction{,} then the result holds trivially, since it is possible to realize $s'$ before $s_0$ by rearranging the start of the reduction.
    We conclude that $\NN(s')\nsubseteq \NN(s_0)$, in this case    
    $$
    |\NN(s_0)\setminus \NN(s')|
    =|\NN(s_0)|-|\NN(s_0)\cap \NN(s')|
    > |\NN(s_0)|-|\NN(s')|
    \ge |\NN(s_0)|-(|\NN(s_0)|-1)
    \ge 1.
    $$

    Hence, the set $\NN(s_0)\setminus \NN(s')$ contains at least two characters. 
    Let $c_1$ and $c_2$ denote two of these characters.  
    Moreover, since $c_1$ and $c_2$ are maximal, there exist species $s_1\in \NN(c_1)\setminus \NN(c_2)$ and $s_2\in \NN(c_2)\setminus \NN(c_1)$.
    Finally, since $s_0$ is the first species to be realized in the reduction, we conclude that after the realization of $c_1$ and $c_2$, the set $\{s_1,c_1,s',c_2,s_2\}$ induces a \rs, which is a contradiction.
  \end{proof}

\begin{rmk}
    Notice that after the realization of the initial species $s_0$ in a reduction, the red neighborhood of the character of $s_0$ must be disjoint, otherwise a \rs graph is generated. 
    Therefore, the set $\VSR \cup \VCR$ is composed by $p$ connected components, each of them  formed by the (red) neighborhood of the character of $s_0$. 
\end{rmk}


In the following Proposition~\ref{prop:minimum_species_degree_2}  we show that when a reduction starts with the realization of a species $s_0$ with at least two characters,  we must realize all the characters in the graph before \removing any (red universal) character.

\begin{prop}\label{prop:minimum_species_degree_2}
  Let $G_M$ be a maximal connected \rbg with no active characters ($\VCR=\emptyset$).
  If a reduction of $G_M$ starts with the realization of a species with at least two characters, then all characters of $G_M$ must be realized before any character can be \removed from the graph.
\end{prop}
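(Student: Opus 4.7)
The plan is to argue by contradiction. Suppose $\pi=\langle c_1,\ldots,c_m\rangle$ is a reduction realizing the $p:=|C(s_0)|\ge 2$ characters of $s_0$ first, and that some character $c^*$ (write $c^*=c_{j^*}$) is removed for the first time at some step $k+1<m$, triggered by the realization of $c_{k+1}$. Because no earlier removal has occurred, the original characters $c_1,\ldots,c_p$ are all still active at this moment, and by the Remark preceding the proposition their red neighborhoods are pairwise disjoint. The goal is to derive a structural description of $c^*$ that contradicts the maximality of $c_{k+2}$, which exists because $k+1<m$.

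Consider any species $s^{**}$ of $G_M$ with $c^*\notin C(s^{**})$ and trace its edges through the realizations of $c_1,\ldots,c_{k+1}$, using that no character has yet been removed, so no red edge placed by these realizations has ever been erased. Either $s^{**}$ is non-isolated at step $k+1$, in which case the red edge $c^*\,s^{**}$ added at step $j^*$ is still present and $s^{**}\in\NN^{k+1}(c^*)$; or $s^{**}$ is isolated at some step $q\le k+1$, in which case all black edges incident to $s^{**}$ were erased (forcing $C(s^{**})\subseteq\{c_1,\ldots,c_q\}$) and no red edge to $s^{**}$ persists (forcing every realized $c_j$ with $j\le q$ to satisfy $c_j\in C(s^{**})$, i.e.\ $\{c_1,\ldots,c_q\}\subseteq C(s^{**})$). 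The latter forces $C(s^{**})=\{c_1,\ldots,c_q\}$ for some $q<j^*$, the strict inequality holding because $c^*\notin C(s^{**})$.

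Now specialize to the species $s'\in\NN^{k+1}(c^*)$: since $c^*$ is red-universal, its component is $\{c^*\}\cup\NN^{k+1}(c^*)$, so $s'$ has $c^*$ as its unique neighbor in the current graph. By the same persistence argument, this forces (i) $c_j\in C(s')$ for every realized $c_j$ with $j\ne j^*$, (ii) $c_j\notin C(s')$ for every inactive $c_j$ with $j>k+1$, and (iii) $c^*\notin C(s')$. Hence $C(s')=\{c_1,\ldots,c_{k+1}\}\setminus\{c^*\}$, and by the distinct-species-neighborhood assumption $|\NN^{k+1}(c^*)|\le 1$. Combining with the previous paragraph, every species outside $S(c^*)$ is either this unique $s'$ or satisfies $C(s^{**})=\{c_1,\ldots,c_q\}$ for some $q<j^*$.

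The contradiction now follows from the maximality of $c_{k+2}$. None of the listed species outside $S(c^*)$ contains $c_{k+2}$ in its character set: $s'$ fails since $k+2>k+1$, and any $s^{**}$ with $C(s^{**})=\{c_1,\ldots,c_q\}$ fails since $q<j^*\le k+1<k+2$. Therefore $S(c_{k+2})\subseteq S(c^*)$, and by distinct character neighborhoods this inclusion is strict, contradicting the maximality of $c_{k+2}$. The step I expect to need the most care is the case analysis in the first two paragraphs, especially the chain-isolation reasoning: it relies crucially on the fact that no character has been removed before step $k+1$, so that every red edge placed by a realization in the first $k+1$ steps still persists, which is what makes the description of $C(s^{**})$ as an initial segment $\{c_1,\ldots,c_q\}$ possible.
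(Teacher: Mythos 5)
There is a genuine gap, and it sits in your third paragraph. From ``$c^*$ is red-universal'' you conclude that the connected component of $c^*$ is $\{c^*\}\cup\NN^{k+1}(c^*)$ and hence that each $s'\in\NN^{k+1}(c^*)$ has $c^*$ as its \emph{unique} neighbour. That misreads the definition: red-universal means $c^*$ is red-adjacent to \emph{every species} in its connected component, not that the component contains nothing else. By Proposition~\ref{prop:no_new_components} the partial reduction is a single connected component, which still contains the inactive characters $c_{k+2},\ldots,c_m$ and the other active ones; a species red-adjacent to $c^*$ may perfectly well carry black edges to inactive characters (indeed every species in $S(c_{k+2})$ is non-isolated, hence would have to be red-adjacent to $c^*$, yet has $c_{k+2}$ in its character set). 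So the description $C(s')=\{c_1,\ldots,c_{k+1}\}\setminus\{c^*\}$, the bound $|\NN^{k+1}(c^*)|\le 1$, and the final containment $S(c_{k+2})\subseteq S(c^*)$ are all unjustified --- in fact the isolation analysis in your second paragraph, which is correct, shows $S(c^*)$ consists only of species isolated by step $k+1$, so one would rather get $S(c_{k+2})\cap S(c^*)=\emptyset$, the opposite of what you need.

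A decisive sanity check: after its first sentence your argument never uses $p\ge 2$ --- the pairwise disjointness of the red neighbourhoods of $c_1,\ldots,c_p$ is stated and then dropped. But the proposition is false for $p=1$ (in Figure~\ref{fig:alberoesempio} the character $A$ is \removed at step $2$ while $C,D,E,F$ are still inactive), so any proof that does not exploit $p\ge2$ cannot be correct. The paper's proof uses exactly this: the $p\ge 2$ red neighbourhoods are non-empty and pairwise disjoint, every inactive character has (by maximality, Proposition~\ref{prop:character_relations}) a black neighbour in each of these $p$ red components, and such a species keeps a red edge to the corresponding $c_i$ until $c_i$ itself is \removed; hence no active character can become red-universal in the single connected component while an inactive character remains. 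Your first two paragraphs (the persistence/isolation bookkeeping) are sound and could be kept, but the contradiction has to be extracted from the disjoint red components rather than from a supposed unique neighbour of $c^*$.
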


\begin{proof} 
    Let $s_0$ be the initial species in a reduction, and let $\{c_1,\ldots,c_p\}$ be the set of its characters.
    Since these characters are maximal \correction{but not universal}, we conclude that after their realization, the neighborhood of all characters in $\{c_1,\ldots,c_p\}$ induces a family of non-empty and mutually disjoint sets.
    Therefore, the set $\VCR(\grb^p)$ together with its neighborhood induce a red graph with exactly $p$ distinct connected components \correction{(see Figure~\ref{fig:pcomponents})}.
 
     Notice that, by Proposition~\ref{prop:character_relations}.\ref{item:VB_in_VR}, all \unrealized characters in $\grb^p$ have at least one neighbor in each connected component of $\VCR\cup \VSR$.
     Furthermore, recall that an active character can be \removed only if it becomes universal.
    Therefore, all the inactive characters in $\VCB$ must be realized before any of the characters in $\VCR$ become universal in the set $\VSR$.
    We conclude that no negation is possible before completing the sequence of realizations in the reduction.
    

\end{proof}

\begin{figure}
\centering
      \includegraphics[width=.95\linewidth]{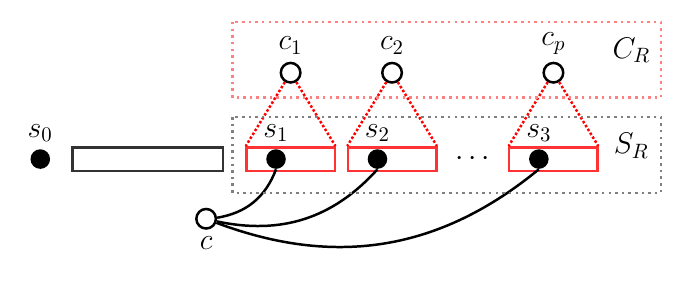}
    \caption{After the realization of the characters $C(s_0)=\{c_1,\ldots,c_p\}$ of an initial species $s_0$, the set $\VCR\cup\VSR$ induces a (red) subgraph containing exactly $p$ connected components. }
    \label{fig:pcomponents}
\end{figure}


Proposition~\ref{prop:first-minimal-species} states that reductions can start with the realization of characters in a minimum size species.
Conversely, we show that all other minimum degree species, beside the initial one, can be \removed only at the end of the reduction. 

The following proposition states that when starting a reduction with a minimum size species, the minimum size species \correction{that do not have as a neighbor a character that is the center of an induced path $P_7$ (as depicted in Figure~\ref{fig:centerP7}) become isolated only at the end of the reduction.
For the sake of simplicity, let denote by \Sm the set of minimum degree species of a graph $G_M$ whose characters do not contain the center of an induced path $P_7$ and thus can be the potential start species for a reduction.}

\begin{prop}\label{minimal_species_in_VSR}
    Let $G_M$ be a \rbg without active characters.
    Then for every reduction starting with $s_0\in \Sm$, none of the species in $\Sm\smallsetminus\{s_0\}$ can be \removed before realizing all the inactive characters in the reduction. 
\end{prop}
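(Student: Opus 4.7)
The plan is to argue by contradiction: suppose a reduction $\pi = \langle c_1, \ldots, c_m \rangle$ starts with $s_0 \in \Sm$ and some species $s' \in \Sm \setminus \{s_0\}$ becomes \removed at a step $k < m$. I would then exhibit a red edge incident to $s'$ that cannot be removed by step $k$, yielding the required contradiction.

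Since $s_0$ and $s'$ are both minimum-degree species, $|C(s_0)| = |C(s')|$, and the standing assumption that no two nodes have the same neighborhood forces $C(s_0) \neq C(s')$. Pick any $c^* \in C(s_0) \setminus C(s')$. By Lemma~\ref{lemma:first-species} together with Proposition~\ref{prop:first-minimal-species}, the first $|C(s_0)|$ realizations of $\pi$ are exactly the characters of $s_0$ in some order, so $c^*$ is realized at some step $i \leq |C(s_0)|$. Moreover $s'$ cannot be \removed strictly before $c^*$ is realized, because then all of $C(s')$ would lie inside $\{c_1, \ldots, c_{i-1}\} \subseteq C(s_0)$, and the cardinality equality $|C(s')| = |C(s_0)|$ would force $C(s') = C(s_0)$, a contradiction. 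Hence $i \leq k$, and $s'$ is still present in the graph when $c^*$ is realized.

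At step $i$ the partial reduction $\grb^{i-1}$ consists of a single connected component by Proposition~\ref{prop:no_new_components}, so $s'$ lies in the same component as $c^*$; since $s' \notin N(c^*)$, the realization of $c^*$ creates a red edge $(c^*, s')$. For $s'$ to be \removed by step $k$, this edge must disappear, and the only available mechanism is step~3 of a realization, which requires $c^*$ itself to become red-universal and be \removed beforehand. Now I split on $|C(s_0)|$. If $|C(s_0)| \geq 2$, Proposition~\ref{prop:minimum_species_degree_2} forbids any character from being \removed before all inactive characters have been realized, so the red edge $(c^*, s')$ persists up to step $m$, contradicting $k < m$.

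The delicate case is $|C(s_0)| = 1$, which falls outside the hypothesis of Proposition~\ref{prop:minimum_species_degree_2}. Here $C(s_0) = \{c_0\}$ and the connectivity of $G_M$ forces $|N(c_0)| \geq 2$, so $c_0$ remains active after step~1 and still carries a red edge to $s'$. To remove that edge, $c_0$ must eventually become red-universal, which in turn requires isolating every species of $N(c_0) \setminus \{s_0\}$ first; by maximality together with the distinct-neighborhoods assumption, each such species has at least two characters. I would then show, using Proposition~\ref{prop:universalTopBottom} to constrain which inactive characters may be realized while $c_0$ is active and Corollary~\ref{cor:centerP7} to bar $c'$ from being realized too early, that any schedule isolating $s'$ before step $m$ either plants a \rs in some intermediate $\grb^j$ or forces $c'$ to be the central character of an induced $P_7$ of $G_M$, contradicting $s' \in \Sm$. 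The main obstacle is exactly this structural step: exhibiting, uniformly across premature schedules, the four species and three characters around $c'$ that realize the forbidden $P_7$.
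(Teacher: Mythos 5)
Your argument for the case $|C(s_0)|\ge 2$ is sound and essentially matches the paper's: the paper shows that after realizing $C(s_0)$ every species of $\Sm\smallsetminus\{s_0\}$ lies in $\VSR(\grb^p)$ and then invokes Proposition~\ref{prop:minimum_species_degree_2} to keep it there, while you instead track the single red edge $(c^*,s')$ created by some $c^*\in C(s_0)\smallsetminus C(s')$; both arguments reduce to the fact that no character can be \removed before all characters are realized. The genuine gap is the case $|C(s_0)|=1$, which you correctly identify as the delicate one but do not prove: you say you ``would show'' that any premature isolation of $s'$ forces $c'$ to be the center of an induced $P_7$, and you explicitly flag the construction of that $P_7$ as the main obstacle. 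That construction is precisely the content of the paper's proof and is where all the work lies. The paper first shows that $s'$ can only be \removed by the realization of its own character $c'$, that just before this step every active character other than $c'$ must already have been \removed, and then, letting $c_1$ be the last such character, extracts from maximality four species $s_1,s_2,s_3,s_4$ and a second inactive character $c_2$ such that $\{s_4\,c_1\,s_1\,c'\,s_2\,c_2\,s_3\}$ is an induced black $P_7$ of $G_M$ centered at $c'$, contradicting $s'\in\Sm$. None of these steps is routine: the existence of $s_2$ relies on Proposition~\ref{prop:no_new_components} to rule out $c'$ becoming red-universal too early, and the choice of $s_1$ uses that $c_1$ is the \emph{last} character \removed before $s'$. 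Until you supply this construction (or an equivalent one), the proposition is unproved exactly in the situation for which the $P_7$-based definition of $\Sm$ exists.
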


\begin{proof}
    Let $p=|C(s_0)|$ be the size of $s_0$.        
    We distinguish two cases: 
    \begin{description}
    \item[Case 1: ($p\ge 2$).] 
      As discussed in the proof of  Proposition~\ref{prop:minimum_species_degree_2}, after the realization of the characters in $C(s_0)$, the \rbg induced by vertices in $\VCR(\grb^p)\cup \VSR(\grb^p)$ contains exactly $p$ connected components \correction{(see Figure~\ref{fig:pcomponents})}.
      
      We claim that none of the species in $\VSB(\grb^p)$ has minimum size, thus all the remaining minimum size  species belong to the set $\VSR(\grb^p)$.
      Indeed, by the Remark~\ref{rem:character_basics}.\ref{rem:VRadjVB}, each character in $C(s_0)$ is adjacent, in the graph $G_M$ (i.e. before their realization), to all the species in $\VSB(\grb^p)$.
      Thus, each species in $\VSB(\grb^p)$ is adjacent to all characters in $s_0$.
      Moreover, the species different from $s_0$ must be adjacent to at least one character distinct from those in $s_0$ as they still have to be realized 
      and are all distinct species.
        It follows that all the species in $\VSB(\grb^p)$ have degree at least $p+1$, and therefore they cannot be of minimum size.  

       Additionally, by Proposition~\ref{prop:minimum_species_degree_2}, none of the realized characters can be \removed before the realization of all the characters in the reduction.
        Therefore, $\Sm\smallsetminus\{s_0\}\subseteq \VSR(\grb^p) \subseteq  \VSR(\grb^{m-1})$.

    \item[Case 2. ($p=1$).] Let $C(s_0)=c_0$, and let $s' \neq s_0$ be another minimum degree species having a single character, which we denote by $c'$.

    Assume that in a reduction, the species $s'$ is \removed before the realization of all the inactive characters; we will prove that in this case $c'$ is in the middle of an induced seven-path, and thus $s' \notin \Sm$.
    
    First, we show that in such \correction{a} reduction, the species $s'$ must be \removed with the realization of $c'$.
    
    Assume to the contrary that instead, when $c'$ is realized, the species $s'$ cannot be \removed.
    Therefore, before \removing $s'$, a character $c''\neq c'$ must be realized.
    By the maximality of character $c''$, we know that there exists a species $s''$ which is in the set of species of $c''$ but not in the one of $c'$.
    On the other hand, $s'$ is not a character of $s''$ since the only character of $s'$ is $c'$.
    We conclude that the realization of $c''$ removes a black edge between $c''$ and $s''$ and creates a red edge between $c''$ and $s'$ (see Figure~\ref{fig:minimalfirst} left).
    But, since all partial reductions are connected (Proposition~\ref{prop:no_new_components}) it must be that neither $c''$ nor $c'$ can be \removed until all the active species are realized, otherwise neither $c'$ nor $c''$ can not become red universal and $s'$ can not be removed, a contradiction.
    Consequently, $s'$ is \removed with the realization of $c'$.
    We denote by $t$ the step in the reduction when $c'$ is realized, that is $\grb^t$ is the first partial reduction where $s'$ has been \removed.

 Since $s'$ has only the inactive character $c'$ in the original graph, it must be that the realization of all characters before the realization of $c'$ generated a red edge between all the active characters and $s'$.
  



Therefore,  we have that in $\grb^{(t-1)}$, the species $ s'$ is \removed by the realization of $c'$ and just after \removing all active characters different from $c'$.
 Figure~\ref{fig:minimalfirst} (right) depicts the structure of $\grb^{(t-1)}$.
Let $c_1$ be the last character \removed from $\grb^{(t-1)}$ before \removing the character $s'$.
Since character $c_1$ is \removed (becomes  red universal) with the realization of $c'$, there must exist a species $s_1 \in \VSB\left(\grb^{(t-1)}\right)$ having the character $c'$ but no $c_1$.
Otherwise, the character $c_1$ would be red universal before \removing the character $s'$ contradicting the assumption that $c_1$ is the last character to be \removed to allow $s'$ to be \removed.

On the other hand, there must exist a species in the original graph, denoted by $s_2$, which contains the character $c'$ but not $c_1$ (see Figure~\ref{fig:minimalfirst} right). 
Indeed, if  such species does not exist, it would follow that after \removing $c_1$ and $s'$, the character $c'$ can be \removed as it becomes red universal in the graph. 
This is not possible, as by \correction{Proposition}~\ref{prop:no_new_components} it would follow  that no inactive character is left in the graph and thus $s'$ will be the  last species to be \removed from the graph, which is a contradiction with our initial assumption. 


Moreover, since species $s_2$ is different from $s'$ but shares the character $c'$ with $s'$, then $s_2$ must be adjacent to at least \correction{one other} inactive character $c_2$ in $\grb^{t}$. 
Furthermore, $c_2$ is a maximal character and hence compared with $c'$, it has a species $s_3$ that is not a species of $c'$ nor $c_1$, as $c_1$ becomes red universal in $\grb^t$ 
(see Figure~\ref{fig:minimalfirst} right).

Finally, notice that by the maximality of $c_1$, we can ensure the existence of a species $s_4\in C(c_1)$  which is not a species of $c'$ nor $c_2$.

We conclude that the set $\{s_4,c_1,s_1,c',s_2,c_2,s_3\}$ induces a black seven-path in $G_M$, centered at $c'$, and then $s'\notin \Sm$, which concludes the proof.

\end{description}
\end{proof}

  \begin{figure}[htb!]
    \centering
          \includegraphics[width=.95\linewidth]{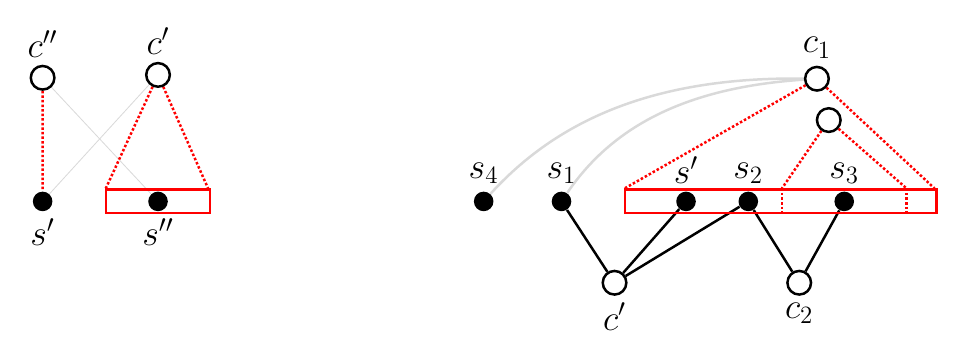}
    \caption{
      Proof of Proposition~\ref{minimal_species_in_VSR}. If a realization starts with a minimum degree species that has a single character, then every other minimum degree species $s'$ with a single character $c'$ must be \removed after all inactive characters are realized, otherwise character $c'$ is the center of a path with seven nodes.}
    \label{fig:minimalfirst}
  \end{figure}

\correction{
\begin{cor}\label{cor:minimal_in_red}
    Let $G_M$ be a \rbg without active characters.
    Then for every reduction starting with $s_0\in \Sm$, we have that all species in $\Sm\smallsetminus\{s_0\}$ are in the set $\VSR(\grb^k)$ for every $k\in\{1,\ldots,n\}$.
\end{cor}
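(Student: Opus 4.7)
The plan is to derive the corollary as a direct consequence of Proposition~\ref{minimal_species_in_VSR}, which already guarantees that no species in $\Sm\setminus\{s_0\}$ can be isolated before all inactive characters are realized, combined with the observation that every such species must acquire and retain at least one red edge during the reduction.

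Fix any $s' \in \Sm\setminus\{s_0\}$. Since both $s_0$ and $s'$ have minimum degree, $|C(s')|=|C(s_0)|=p$, and by the standing assumption that no two species share the same neighborhood we have $C(s')\neq C(s_0)$; hence there exists a character $c^\star \in C(s_0)\setminus C(s')$. By Proposition~\ref{prop:first-minimal-species}, a reduction starting with $s_0$ begins by realizing all the characters in $C(s_0)$ in arbitrary order, so $c^\star$ is realized within the first $p$ steps. At that step, since $s'$ lies in the same connected component as $c^\star$ (Proposition~\ref{prop:no_new_components}) and $s'$ does not originally possess $c^\star$, the realization of $c^\star$ creates the red edge $(c^\star,s')$ and therefore places $s'$ in $\VSR$.

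To conclude, I would show that a red edge incident to $s'$ survives in every subsequent partial reduction. When $p\ge 2$, Proposition~\ref{prop:minimum_species_degree_2} ensures that no realized character is isolated before every inactive character has been realized, so in particular $c^\star$ remains in the graph and the red edge $(c^\star, s')$ is preserved in every $\grbk$. When $p=1$ the same conclusion is reached by a slightly different route: each subsequent realized character $c'' \neq c'$ (where $c'$ is the unique character of $s'$) introduces a fresh red edge $(c'', s')$ because $s'$ does not originally have $c''$, and by Proposition~\ref{minimal_species_in_VSR} $s'$ itself is not isolated prematurely, so $s'$ always carries at least one red edge.

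The main point of care is the bookkeeping for the very first steps $k<p$, during which $c^\star$ may not yet have been realized and $s'$ could in principle be incident only to black edges. This is handled by appealing to Lemma~\ref{lemma:swaporder} to reorder the initial batch of realizations of $C(s_0)$ so that some element of $C(s_0)\setminus C(s')$ is taken first, which does not affect the reducibility of the resulting partial reductions and thus settles the remaining indices in the range of interest.
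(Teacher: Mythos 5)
Your overall strategy mirrors the paper's: the same case split on $p=|C(s_0)|$, with Proposition~\ref{prop:minimum_species_degree_2} carrying the case $p\ge 2$ and Proposition~\ref{minimal_species_in_VSR} carrying the case $p=1$, together with the same observation that some $c^\star\in C(s_0)\setminus C(s')$ puts $s'$ into $\VSR$. The case $p\ge 2$ is fine. The gap is in the case $p=1$: from the facts that each realized character $c''\neq c'$ creates a red edge to $s'$ and that $s'$ is not isolated prematurely, you conclude that $s'$ always carries at least one red edge, but that inference does not go through as stated. Red edges to $s'$ disappear whenever their character becomes red-universal and is isolated, and as long as $c'$ is still inactive the species $s'$ retains its black edge to $c'$; so $s'$ could in principle sit in $\VSB$ without being isolated --- a state that would contradict the corollary but does not contradict Proposition~\ref{minimal_species_in_VSR}. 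To close this you need the paper's extra step: a character's red edges are only ever removed all at once (when it is isolated), so every character that is still active and different from $c'$ necessarily retains its red edge to $s'$; hence if $s'\notin\VSR(\grb^{k'})$ the only possible active character of $\grb^{k'}$ is $c'$, and Proposition~\ref{prop:no_new_components} (every partial reduction has an active character) then forces $c'$ to have been realized, whence $s'$ has neither black nor red edges and is isolated --- and only at this point does Proposition~\ref{minimal_species_in_VSR} yield the contradiction.

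A second, smaller issue: your appeal to Lemma~\ref{lemma:swaporder} to fix the indices $k<p$ produces a \emph{different} reduction, whereas the corollary is quantified over every reduction starting with $s_0$; reordering proves the claim for the reordered sequence, not for the given one. (The paper's own proof silently starts the argument at $\grb^p$, so you are at least as attentive as the source here, but the proposed fix does not literally establish the stated claim for $1\le k<p$.)
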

\begin{proof} 
    Let $p=|C(s_0)|$ be the size of $s_0$ and let $s_1\in \Sm\smallsetminus\{s_0\}$ with $C(s_1)=c'$.        
    We distinguish two cases: 
    \begin{description}
    \item[Case 1: ($p\ge 2$).] 
        Since $s_1\neq s_0$ we have that $s_1\in \VSR\left(\grb^p\right)$. 
        Moreover, by Proposition~\ref{prop:minimum_species_degree_2} we have that no character can be isolated before realizing all characters. 
        Therefore, $s_1$ is in $\VSR(\grb^k)$ for every $k\in\{1,\ldots,n\}$.
        
    \item[Case 2: ($p = 1$).] 
        Le us assume by contradiction that there exists $k'\in \{1,\ldots,n\}$ such that in a partial reduction $\grb^{k'}$, the species $s_1$ is not in $\VSR(\grb^{k'})$.
        Therefore, the only active character in $\grb^{k'}$ could be $c'$, and therefore the species $s_1$ has been isolated in $\grb^{k'}$ which is a contradiction with Proposition~\ref{minimal_species_in_VSR}. 
    \end{description}
\end{proof}
}

\begin{rmk}
    Note that the previous result implies that all the minimum size species in $\Sm$, but the initial one, are leaves of the phylogenetic tree generated by the reduction.
    \correction{For instance, the maximal graph $\grb^0$ of the example depicted in Figure~\ref{fig:alberoesempio} contains three minimum size species: $s_1, s_3$, and $s_5$; but only $s_1\in \Sm$.
    Indeed, $s_3$ has as neighbor the character $B$ which is the center of the path $\{s_1\, A\, s_2\, B\, s_4\, C\, s_9\}$, while species $s_5$ has as neighbor the character $C$ which is the center of the path  $\{s_7\, E\, s_9\, C\, s_4\, B\, s_2\}$.
    Thus, the species $s_1$ is the only potential initial species to start a reduction. 
    Moreover, as can be seen the phylogenetic tree in Figure~\ref{fig:alberoesempio} both $s_3$ and $s_5$ do not label any leaves.  
    }
\end{rmk}

\section{Recognizing   maximal reducible graphs in polynomial time}\label{sec:algo_1}

In this section we propose a polynomial time algorithm for recognizing maximal reducible graphs that is based on properties of reductions proved in the previous sections.

Let us recall that, by Proposition~\ref{prop:no_new_components}, partial reductions consist of a single connected component. Thus, the algorithm works under this assumption.  
If this is not the case, we know that a reduction can be constructed independently on each of the connected components of the graph. 

The algorithm iterates the following main steps until the graph has no \unrealized characters or a partial reduction contains a \rs, which means that the graph is not reducible, that is does not represent a \dolloone phylogeny (Theorem \ref{thm:notredP5}). 
Each iteration aims to realize a safe character. 

\correction{
\begin{enumerate}

    \item Initially, the $S$-partition of the species set $S=\VSB\cup\VSR$ and the $C$-partition of inactive characters, that is    $\VCB=\VCont\cup\VInt\cup\VUni$, are computed. 
    The following cases are possible.

    \item \label{item:case_empty_CR} If $\VCR$ is empty, then $\VInt=\emptyset$, $\VCont=\emptyset$ and $\VUni\neq\emptyset$. 
    This scenario occurs, for example, at the initial iteration of the algorithm when dealing with a black graph without active characters.
    In this case, Proposition~\ref{prop:first-minimal-species} guarantees the existence of a reduction starting with a minimum degree species.
    If the set \Sm has a single element, the reduction starts with this species. 
    Otherwise, the set \Sm could have multiple elements. 
    Since the potential initial minimum degree species is unknown, we must iterate over all minimum degree species in \Sm as starting points until we obtain a reduction of the graph.
    This procedure could potentially lead to an exponential time complexity when the iteration through the minimum degree species must be executed multiple times.
    Nevertheless, in Proposition~\ref{prop:no_pendant_VUni} we will show that this iteration must be performed only once in the entire algorithm execution.
    
    \item If $\VInt=\emptyset$ and $\VUni=\emptyset$ then characters in \VCont need to be realized: any of them is safe according to Theorem~\ref{thm:safe_character}~(\ref{item:VCont}).
    
    \item If $\VInt=\emptyset$ and $\VUni\neq\emptyset$, then by Proposition \ref{prop:character_relations}~(\ref{item:VC_or_VU}), we have $\VCont=\emptyset$.   
    In this case, all inactive characters are in \VUni. By  Theorem~\ref{thm:safe_character}~(\ref{item:recursive_VUni}), this case reduces to a scenario where $\VCR=\emptyset$ (see Figure~\ref{fig:Cu_restart}), which has been addressed in the point \ref{item:case_empty_CR}.
    Furthermore, if in any of the previous iterations the procedure has iterated through the species in \Sm, then by Proposition~\ref{prop:no_pendant_VUni}, we have that \VUni is composed of a single character, thus forcing the selection of the next character of the reduction.

    \item Finally, if \VInt is not empty, then   Theorem~\ref{thm:safe_character}~(\ref{item:recursive_VInt}) ensures that a maximal character of the order $\pi_U$ is safe.   Note that the order $\pi_U$ is defined by universal characters over a specific subset of species in \VSB.
                  
    \end{enumerate}
}

The algorithm returns, when it exists, a reduction of the input graph.
Its correctness follows from the fact that it iteratively finds a safe character in all the partial reductions. 
Conversely, if the input graph does not have a reduction, the algorithm returns an ordering that generates a \rs when characters are realized according to this ordering.

Table \ref{tab:alg:execution} depicts the state of the different sets of the character partition along the execution of the algorithm on the graph $G_M$ defined in Figure~\ref{fig:alberoesempio}.

\begin{table}[h!tb]
    \centering
 \setlength{\tabcolsep}{1pt} 
 \begin{tabular}{cc ccccccc}
   \text{\correction{Iteration}} & 
   \begin{tabular}{cc}\text{\correction{Partial}} \\ \text{\correction{reduction}}\end{tabular}&
   \Sm& \VInt & \VUni &   \VCont &
   \text{\correction{$c_m$}} &
   $\pi_U$ &\text{Realization}\\
   \hline
   0 & $\grb^0$ &$\{s_1\}$ & -  & $\{A,B,C,D,E,F\}$ & - & - & - & $A$ \\
   1 & $\grb^1$ &-  & $\{ B \}$ & - & $\{C,D,E,F\}$ & -  & $\langle B\rangle$ & $B$ \\
   2 & $\grb^2$ &-  & $\{ C \}$ & - & $\{D,E,F\}$ & -  & $\langle C\rangle$ & $C$ \\
   3 & $\grb^3$ &-  & $\{F,E\}$ &  $\{D\}$ & -  & $E$ &  $\langle F,D,E\rangle$ & $F,D,E$ \\
 \end{tabular}
    \caption{
    The table depicts the state of the different \correction{relevant} sets along the execution of the algorithm in the instance of Figure~\ref{fig:alberoesempio}.
    \correction{
    We associate each state of the algorithm with its corresponding partial reduction. 
    Note that during the third iteration, we found the situation described in Proposition~\ref{prop:NN_VB2}, where the containment order $\pi_I=\langle F,E\rangle$ of characters in \VInt can be extended to include the characters in \VCBm to define the order $\pi_U$.}
    }
    \label{tab:alg:execution}
\end{table}

\paragraph{\bf Algorithm complexity.}
\label{sec:complexity}

In the following, we will prove that the described  algorithm  has a polynomial time complexity.
To this end, we state the following result which permits to bound the number of iterations  made by the algorithm along its execution.

\begin{prop}\label{prop:no_pendant_VUni}
  Let $G_M$  be a reducible maximal graph with $\VCR=\emptyset$.
  If the set $\Sm$ contains more than one minimum degree species, then  $|\VUni(\grbk)|\le 1$ \correction{for all $1\le k< m$.
  That is, the set of universal characters contains at most one element in all the \rbgs generated by a reduction starting by \removing one of the minimum degree species in \Sm.
}
\end{prop}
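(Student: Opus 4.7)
The plan is to fix any second minimum-degree species $s_1 \in \Sm\setminus\{s_0\}$ and use it to pin down $\VUni(\grbk)$ very tightly. Since $s_0$ and $s_1$ share the minimum degree $p=|C(s_0)|=|C(s_1)|$, and since Corollary~\ref{cor:minimal_in_red} keeps $s_1$ in $\VSR(\grbk)$ throughout the reduction, every $c \in \VUni(\grbk)$ must be adjacent to $s_1$ via the black edge it inherits from $G_M$. This yields the inclusion $\VUni(\grbk) \subseteq C(s_1)\cap\VCB(\grbk)$, which will carry the whole argument.

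I then split on $p$. For $p=1$, $C(s_1)$ is itself a singleton and the bound $|\VUni(\grbk)|\le 1$ is immediate. For $p\ge 2$, I would appeal to the remark following Proposition~\ref{prop:first-minimal-species}: once all of $C(s_0)$ has been realized, the red neighborhoods of its characters within the connected component of $s_0$ are pairwise disjoint, so every species in that component other than $s_0$ is missing exactly one character of $C(s_0)$. Applied to $s_1$, this forces $|C(s_0)\cap C(s_1)| = p-1$, so $C(s_1)\setminus C(s_0)$ consists of a single character $c^\star$. For every $k\ge p$, the remaining $p-1$ elements of $C(s_1)$ all lie in $\VCR(\grbk)$ and, by Proposition~\ref{prop:minimum_species_degree_2}, none of them is isolated before the final step; hence $C(s_1)\cap\VCB(\grbk)\subseteq\{c^\star\}$, closing the bound.

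The step I anticipate as the main obstacle is the initial segment $1\le k<p$ in the case $p\ge 2$, during which the algorithm is still realizing the characters of $s_0$ one at a time and $s_1$ might not yet have entered $\VSR(\grbk)$. My plan here is to exploit the freedom in the ordering of $C(s_0)$'s realizations granted by Proposition~\ref{prop:first-minimal-species}, together with the swapping Lemma~\ref{lemma:swaporder} whose preconditions are satisfied in this phase thanks to Propositions~\ref{prop:no_new_components} and~\ref{prop:minimum_species_degree_2}, to permute the first $p$ realizations so that the unique character of $C(s_0)\setminus C(s_1)$ is realized first, immediately placing $s_1$ in $\VSR(\grbk)$. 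Making this rearrangement work uniformly across all candidate witnesses $s_1\in\Sm\setminus\{s_0\}$ simultaneously is the subtle point that will demand the most care.
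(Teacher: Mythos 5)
Your proposal follows essentially the same route as the paper: fix a second minimum-degree species $s_1\in\Sm\smallsetminus\{s_0\}$, use Corollary~\ref{cor:minimal_in_red} to keep it in $\VSR(\grbk)$, observe that every character of $\VUni(\grbk)$ must be a black neighbor of $s_1$ in $G_M$ while $p-1$ of $s_1$'s $p$ neighbors are the already-realized characters it shares with $C(s_0)$ (by disjointness of their red neighborhoods), and count degrees; the $p=1$ case is likewise identical. The only divergence is your concern about the initial segment $k<p$: the paper disposes of it by citing Corollary~\ref{cor:minimal_in_red} directly for the membership $s_1\in\VSR(\grbk)$ at every step, so no reordering of the first $p$ realizations via Lemma~\ref{lemma:swaporder} is invoked there.
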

\begin{proof}
    \correction{
    Consider a realization starting with a minimum degree species $s_0$. 
    By hypothesis, the set $\Sm$ contains at least two elements; therefore, there exists a minimum degree species $s_1 \in \Sm$ different from $s_0$.
    Clearly, also $s_1$ has $p$ characters, being a minimum degree species as $s_0$, that is $\deg(s_1)=p$.    

    We have two cases:
    \begin{description}
    \item[Case 1: ($p\ge 2$).] 
        Let $\{c_1,\ldots,c_p\}$ be this set of characters of $s_0$. 
        By Corollary~\ref{cor:minimal_in_red} we have that $s_1$ is in the set $\VSR(\grbk)$, for all $1\le k< m$.
        On the other hand, the set $\VCR(\grbk)\cup \VSR(\grbk)$ has at least $p$ connected components, composed  by the non-neighborhood of the $p$ characters of $s_0$ (see Figure~\ref{fig:pcomponents}).
        Now, observe that since $s_1$ is a non-neighbor of $c_1$ and the set $\VCR(\grbk)\cup \VSR(\grbk)$ has at least $p$ connected components, composed by the non-neighborhood of the $p$ characters of $s_0$, it must be that  in the input black graph $G_M$, the species $s_1$ has at least $p-1$ neighbors: $\{c_2,\ldots,c_p\}$ (see Figure~\ref{fig:pcomponents}).
   
        On the other hand, by its definition, for each $k\in \{1,\ldots,m\}$, all characters in $\VUni(\grbk)$ are (black) universal in $\VSR(\grbk)$, hence they are neighbors of $s_1$ in $G_M$. 
        We conclude that 
        \mbox{$\deg(s_1)=p\ge (p-1)+|\VUni(\grbk)|$} 
        and therefore 
        \mbox{$|\VUni(\grbk)|\le 1.$}

    \item[Case 2: ($p=1$).] 
         For all $1\le k< m$, we have that $s_1\in \VSR(\grbk)$ (Corollary~\ref{cor:minimal_in_red}).
         Hence, $s_1$ is a neighbor of each character in  $\VUni(\grbk)$ in $G_M$.
         
        We conclude that 
        \mbox{$\deg(s_1)=1\ge |\VUni(\grbk)|$} and therefore
        \mbox{$|\VUni(\grbk)|\le 1.$}
    
    \end{description}

    }    
\end{proof}


Observe that in a \rbg, the computation of the sets \VCont, \VUni, and \VInt requires time $O(nm) $ in \correction{the} worst case by a naive algorithm based on visiting  the neighborhood of each node in the \rbg. 
The realization of a character requires the computation of the connected components \correction{of in its corresponding partial reduction} and thus in the worst case requires $O(nm)$. 
Since we need to update the sets \VSB and \VSR,  a naive approach would
require  a polynomial time complexity that is $O(n^2 m)$ to update the graph. 

 Moreover, Proposition~\ref{prop:no_pendant_VUni} guarantees that when multiple minimum degree species are found, the input \rbg contains at most one inactive universal character in all the potential subsequent iterative calls.
 Therefore, through the entire execution of the algorithm, we may repeat the realization of a character at most $O(n)$ times, i.e. the number of distinct minimum degree species in a graph.
 
 As mentioned above, \correction{in each iteration}, a naive implementation requires $O(n^2 m)$ time to compute and realize a single safe character and update the graph.
 \correction{Thus, in the worst case, we may have a time complexity that is $O(n^2 m^2)$ for $m$ realizations of all safe characters. 
 Since we repeat the realization of a single character at most $O(n)$ times, the overall complexity for a naive approach is $O(n^3 m^2)$.}

\section{Conclusions}
\label{sec:conclusions}
In this paper, we settle the complexity of recognizing maximal graphs representing \dolloone phylogenies by providing a polynomial algorithm based on an iterative construction of a reduction.
It is worth noticing that \correction{our algorithm exploits the existence of a notion of a universal character restricted to a species subset, similarly to the strategy used in \cite{Peer_2004} for the Perfect Phylogeny problem on incomplete matrices.}
An interesting open question is to provide a characterization of the class of maximal graphs representing a \dolloone phylogenies 
based on a set of forbidden substructures, as in the case of the Perfect Phylogenies.

\correction{Our results encourage the search for a polynomial-time recognition algorithm for the general case where non-maximal characters are included.
However, this generalization presents significant challenges. In the general case, partial reductions are not necessarily connected. 
Moreover, this partition might not be unique, which increases the potential choices in the sequential construction of a reduction and ultimately could increase the problem complexity.
Nevertheless, our method provides a foundation upon which a general solution can be built.}

\section{Acknowledgments}
 
P.B. and  G.D.V. have received funding from
the European Union's Horizon 2020 research and innovation programme under the Marie Sk\l{}odowska-Curie grant
agreement PANGAIA No. 872539.  and ITN ALPACA N.956229.
P.B. is also supported by the grant MIUR
2022YRB97K, PINC, Pangenome Informatics: from Theory to
Applications, funded by the EU, Next-Generation EU, PNRR Mission 4.

M.S.G. is supported by the National Center for Gene Therapy and Drugs Based on RNA Technology—MUR (Project no.CN\_00000041) funded by NextGeneration EU Program.

\bibliographystyle{splncs04}
\bibliography{algolab,persistent}

\begin{thebibliography}{10}
\providecommand{\url}[1]{\texttt{#1}}
\providecommand{\urlprefix}{URL }
\providecommand{\doi}[1]{https://doi.org/#1}

\bibitem{BBG21}
Bernardini, G., Bonizzoni, P., Gawrychowski, P.: Incomplete directed perfect
  phylogeny in linear time. In: Lubiw, A., Salavatipour, M., He, M. (eds.)
  Workshop on Algorithms and Data Structures. pp. 172--185. No. 12808 (Jul
  2021). \doi{10.1007/978-3-030-83508-8\_13}

\bibitem{Boniz}
Bonizzoni, P.: A linear-time algorithm for the perfect phylogeny haplotype
  problem. Algorithmica  \textbf{48}(3),  267--285 (2007).
  \doi{10.1007/s00453-007-0094-3}

\bibitem{BonizzoniBDT12}
Bonizzoni, P., Braghin, C., Dondi, R., Trucco, G.: The binary perfect phylogeny
  with persistent characters. Theoretical Computer Science  \textbf{454},
  51--63 (2012). \doi{10.1016/j.tcs.2012.05.035}

\bibitem{bonizzoni_when_2014}
Bonizzoni, P., Carrieri, A.P., Della~Vedova, G., Dondi, R., Przytycka, T.M.:
  When and {How} the {Perfect} {Phylogeny} {Model} {Explains} {Evolution}. In:
  Discrete and {Topological} {Models} in {Molecular} {Biology}, pp. 67--83.
  Natural {Computing} {Series}, Springer Berlin Heidelberg (2014).
  \doi{10.1007/978-3-642-40193-0\_4}

\bibitem{Bonizzoni2016}
Bonizzoni, P., Carrieri, A.P., Della~Vedova, G., Rizzi, R., Trucco, G.: A
  colored graph approach to perfect phylogeny with persistent characters.
  Theoretical Computer Science  \textbf{658},  60--73 (2017).
  \doi{10.1016/j.tcs.2016.08.015}

\bibitem{bonizzoni2014}
Bonizzoni, P., Carrieri, A.P., Della~Vedova, G., Trucco, G.: Explaining
  evolution via constrained persistent perfect phylogeny. BMC Genomics
  \textbf{15}(6), ~S10 (2014). \doi{10.1186/1471-2164-15-S6-S10}

\bibitem{bonizzoni2018does}
Bonizzoni, P., Ciccolella, S., Della~Vedova, G., Soto~Gomez, M.: Does relaxing
  the infinite sites assumption give better tumor phylogenies? an {ILP}-based
  comparative approach. {IEEE}/{ACM} Transactions on Computational Biology and
  Bioinformatics  \textbf{16}(5),  1410--1423 (2018).
  \doi{10.1109/tcbb.2018.2865729}

\bibitem{gpps}
Ciccolella, S., Soto~Gomez, M., Patterson, M.D., Della~Vedova, G.,
  Hajirasouliha, I., Bonizzoni, P.: gpps: an {ILP}-based approach for inferring
  cancer progression with mutation losses from single cell data. {BMC}
  Bioinformatics  \textbf{21}(413) (2020). \doi{10.1186/s12859-020-03736-7}

\bibitem{Kebir}
El-Kebir, M.: {SPhyR: tumor phylogeny estimation from single-cell sequencing
  data under loss and error}. Bioinformatics  \textbf{34}(17),  i671--i679 (Sep
  2018). \doi{10.1093/bioinformatics/bty589}

\bibitem{Gusfield91}
Gusfield, D.: Efficient algorithms for inferring evolutionary trees. Networks
  \textbf{21}(1),  19--28 (1991). \doi{10.1002/net.3230210104}

\bibitem{Gusfield2015}
Gusfield, D.: Persistent phylogeny: a galled-tree and integer linear
  programming approach. In: Proceedings of the 6th ACM Conference on
  Bioinformatics, Computational Biology and Health Informatics. pp. 443--451.
  ACM, {ACM} (Sep 2015). \doi{10.1145/2808719.2808765}

\bibitem{Gus02}
Gusfield, D.: Haplotyping as perfect phylogeny: Conceptual framework and
  efficient solutions. In: RECOMB 2002. pp. 166--175. {ACM} (2002).
  \doi{10.1145/565196.565218}

\bibitem{Kuipers13102017}
Kuipers, J., Jahn, K., Raphael, B.J., Beerenwinkel, N.: Single-cell sequencing
  data reveal widespread recurrence and loss of mutational hits in the life
  histories of tumors. Genome Research  \textbf{27}(11),  1885--1894 (2017).
  \doi{10.1101/gr.220707.117}

\bibitem{Peer_2004}
Pe'er, I., Pupko, T., Shamir, R., Sharan, R.: Incomplete directed perfect
  phylogeny. SIAM Journal on Computing  \textbf{33}(3),  590--607 (Jan 2004).
  \doi{10.1137/s0097539702406510}

\bibitem{Dollo}
Rogozin, I.B., Wolf, Y.I., Babenko, V.N., Koonin, E.V.: Dollo parsimony and the
  reconstruction of genome evolution. In: Albert, I.V.A. (ed.) Parsimony,
  Phylogeny, and Genomics. Oxford University Press (2006)

\bibitem{wicke2020}
Wicke, K., Fischer, M.: Combinatorial views on persistent characters in
  phylogenetics. Advances in Applied Mathematics  \textbf{119},  102046 (Aug
  2020). \doi{10.1016/j.aam.2020.102046}

\end{thebibliography}
\end{document}